\documentclass[11pt]{amsart}
\usepackage{amssymb,mathrsfs,graphicx,extpfeil}
\usepackage{epsfig}
\usepackage{indentfirst, latexsym, amssymb, enumerate,amsmath,graphicx}
\usepackage{float}
\usepackage{colortbl}
\usepackage{epsfig,subfigure}
\usepackage[pagewise]{lineno}
\usepackage{mathtools}
\usepackage{xcolor}
\usepackage{enumitem}
\usepackage{graphicx}
\usepackage{mathtools}
\usepackage{verbatim} % multiline comments
\usepackage{bm} % bold math
\PassOptionsToPackage{hyphens}{url} % Correct line breaks of URLs
	\usepackage[colorlinks=true]{hyperref}
	\usepackage{url}
% END - Hyperrefs

% manually added packages
\usepackage{amsmath,amssymb, amsfonts,extarrows,dsfont}
\usepackage{xcolor}
\usepackage{enumitem}
\usepackage{graphicx}

\allowdisplaybreaks

\topmargin-0.1in \textwidth6.in \textheight8.5in \oddsidemargin0in
\evensidemargin0in

% mathematical symbols and definitions 

\newcommand{\Var}[1]{\ensuremath{\mathrm{Var}\left(#1\right)}}
\newcommand{\<}{\kern-2pt\left\langle}
\renewcommand{\>}{\right\rangle\kern-2pt}

\renewcommand{\d}[1]{\ensuremath{\mathrm{d} #1}}

\newtheorem{thm}{Theorem}
\newtheorem{defn}{Definition}
\newtheorem{lem}{Lemma}

\newtheorem{rem}{Remark}

\newcommand{\tos}{\xlongrightarrow[N\to\infty]{\text{in prob}}}

% comments and editing 

% for editing and commenting use \uwave{} to underline, \sout to strike out

% Insert the name of "your journal" with
% \journalname{myjournal}

%%%%%%%%%%%%%%%%%%%%%%%%%%%%%%%%%%%%%%%%%%%%%%%%%%%%%%%%%%%%%
%%%%%%%%%%%%%%%%%%%%%%%%%%%%%%%%%%%%%%%%%%%%%%%%%%%%%%%%%%%%%
\begin{document}

\title[Mean-field equation for a model of quorum-sensing microbial populations]{Mean-field equation for a stochastic many-particle model of quorum-sensing microbial populations
%\thanks{Grants or other notes about the article that should go on the front page should be placed here. General acknowledgments should be placed at the end of the article.}
}

%\subtitle{{\color{blue}H: Subtitle if wanted}}

%\titlerunning{Short form of title}        % if too long for running head

\author[E. Frey]{Erwin Frey}
\address[Erwin Frey]{\newline Arnold Sommerfeld Center for Theoretical Physics and Center for NanoScience, Department of Physics, Ludwig-Maximilians-Universit\"at M\"unchen, Theresienstrasse 37, 80333 M\"unchen, Germany \\}
\email{frey@lmu.de}

\author[J. Knebel]{Johannes Knebel}
\address[Johannes Knebel]{\newline Arnold Sommerfeld Center for Theoretical Physics and Center for NanoScience, Department of Physics, Ludwig-Maximilians-Universit\"at M\"unchen, Theresienstrasse 37, 80333 M\"unchen, Germany \\}
\email{johannes.knebel@physik.lmu.de}           

\author[P. Pickl]{Peter Pickl}
\address[Peter Pickl]{\newline Department of Mathematics, Ludwig-Maximilians-Universit\"at M\"unchen, Theresienstrasse 39, 80333 M\"unchen, Germany \\}
\email{pickl@math.lmu.de}

%\authorrunning{Short form of author list} % if too long for running head

%\date{Received: date / Accepted: date}

% The correct dates will be entered by the editor

\maketitle

\begin{abstract} 
We prove a mean-field equation for the dynamics of quorum-sensing microbial populations. 
In the stochastic many-particle process, individuals of a population produce public good molecules to different degrees. Individual production is metabolically costly such that non-producers replicate faster than producers. In addition, individuals sense the average production level in the well-mixed population and adjust their production in response (``quorum sensing'').
Here we prove that the temporal evolution of such quorum-sensing populations converges to a macroscopic mean-field equation for increasing population sizes. To prove convergence, we introduce an auxiliary stochastic mean-field process that mimics the dynamics of the mean-field equation and that samples independently the individual’s production degrees between consecutive update steps. This way, the law of large numbers is separated from the propagation of errors due to correlations. 
Our developed method of an auxiliary stochastic mean-field process may help to prove mean-field equations for other stochastic many-particle processes.
%
%\keywords{Stochastic many-particle process \and Mean-field equation \and Continuous replicator equation \and Auxiliary stochastic mean-field process \and Quorum sensing \and Population dynamics}
% \PACS{02.50.Ey \and 05.20.Dd \and 87.10.Mn \and 92.20.jm \and 05.45.--a \and 87.23.Cc}
%see https://ufn.ru/en/pacs/all/
%02.50.Ey Stochastic processes
%05.10.Gg Stochastic analysis methods (Fokker-Planck, Langevin, etc.)
%05.20.Dd Kinetic theory
%05.45.−a Nonlinear dynamics and chaos
%87.10.Mn Stochastic modeling
%87.23.Cc Population dynamics and ecological pattern formation
%92.20.jm Population dynamics and ecology
%\subclass{MSC 60J20 \and MSC 60J85}
\end{abstract}
\tableofcontents

%%%%%%%%%%%%%%%%%%%%%%%%%%%%%%%%%%%%%%%%%%%%%%%%%%%%%%%%%%%%%
%%%%%%%%%%%%%%%%%%%%%%%%%%%%%%%%%%%%%%%%%%%%%%%%%%%%%%%%%%%%%
\section{Introduction}
\label{sec:intro}

%%%%%%%%%%%%%%%%%%%%%%%%%%%%%%%%%%%%%%%%%%%%%%%%%%%%%%%%%%%%%
\paragraph{Background: Effective descriptions of stochastic many-particle models in biological physics.}
The dynamics of biological systems are often modeled in terms of stochastic many-particle processes. For example, the temporal evolution of a microbial population can be suitably described in terms of a stochastic birth-death process to model the competition between species and their long-time evolution~\cite{Nowak2004,Blythe2007,Weber2017}. 
The dynamics of assemblies of epithelial cells in a tissue in terms of cellular Potts models~\cite{Graner1992,Sepulveda2013,Segerer2015}, the spatial organization of cellular components such as actin filaments in terms of active matter models~\cite{Vicsek1995,Schaller2010,Marchetti2013}, and intra-cellular transport of molecular cargo on such filaments in terms of the totally asymmetric simple exclusion process~\cite{Spitzer1970,Krug1991,Derrida1992,Klumpp2003,Parmeggiani2003,Blythe2007b,Chou2011} are further examples for the successful application of stochastic processes in biological physics.

Stochastic many-particle processes are typically formulated for certain  microscopic degrees of freedom of the biological system under consideration.
To verify or falsify theoretical predictions with experimental observables that characterize the various phenomenological phases of the respective biological system, effective descriptions of the underlying stochastic many-particle processes are needed. 
Such effective descriptions are also important to both identify the parameters and quantify their regimes that may be promising to study in an actual experiment.\\
As an example, consider the gene expression of individual cells in a microbial population that grows in a well-stirred flask. 
From an experimental point of view, the cellular production of a specific protein  can be linked to the expression of a fluorescent reporter protein such as the Green Fluorescent Protein. 
However, it is often not possible to resolve the engineered gene expression at the single-cell level in the desired environment. 
In other words, it may not be possible to observe which cell produces the fluorescent protein to which degree.
Instead, the question of \textit{how many} cells (as opposed to \textit{which} single cell) produce a certain degree of fluorescent proteins may be experimentally accessible, for example, through fluorescence-activated cell sorting (FACS). \\
In mathematical terms, only the empirical one-particle density of the microbial population would be experimentally accessible in this example (\textit{how many cells produce to which extent}), but not the joint probability distribution of all individuals (\textit{which single cell produces to which extent}).
Despite this experimental restriction, the measurement of the one-particle density can be used to discriminate between different phases of the microbial population. 
For example, whether all individuals produce a protein to the same degree (homogeneous phase) or to different degrees (heterogeneous phase) can be discriminated by the shape of the one-particle density (unimodal or bimodal).
In essence, the dynamics of the one-particle density is an effective description at the population level for the stochastic many-particle process, which describes the dynamics of protein production at the single-cell level in the above example, and distinguishes between homogeneous and heterogeneous phases.

Effective descriptions of many-particle processes in biological physics are often guessed by making use of symmetry arguments, by assuming linear response, or are heuristically derived through kinetic theories~\cite{Edelstein1988,deGennes1995,Zwanzig2001,Murray2002,Marchetti2013,Kadar2011,Balian2007}. 
To validate and quantify the scope of validity of effective descriptions, to quantify the speed of convergence, and to show robustness of the modelling approach in the first place, it is necessary to prove the convergence with respect to the system size of the stochastic many-particle process to the effective description.
However, proving such effective descriptions is often cumbersome, which is a well-known problem in various fields of physics such as in statistical physics (Boltzmann equation~\cite{Lanford1975}, Vlasov equation~\cite{Neunzert1974,Braun1977}) and in biological physics (Keller-Segel equation~\cite{Cattiaux2016,Huang2017,Canizares2017}, Cucker-Smale~\cite{Carlen2013,Figalli2017}), and in the field of social sciences such as for pedestrian flows~\cite{Yin2017} and opinion dynamics~\cite{BenNaim2003,Lorenz2007}.

%%%%%%%%%%%%%%%%%%%%%%%%%%%%%%%%%%%%%%%%%%%%%%%%%%%%%%%%%%%%%
\paragraph{Main result of this work: Convergence to mean-field for large $N$ in the quorum-sensing model.}
In this manuscript, we prove the validity of an effective description for the so-called \textit{quorum-sensing model}, which is a stochastic many-particle process for the temporal evolution of a quorum-sensing microbial population of $N$ individuals~\cite{Bauer2017}. 

In the quorum-sensing model, each individual produces public good molecules to different degrees and secretes those into the well-mixed environment; see illustration in Figure~\ref{fig:model}(B). Individual production is metabolically costly, such that non-producers replicate faster than producers. In addition, individuals sense the average production level in the well-mixed population and adjust their production in response (known as ``quorum sensing''). 
As was recently shown, depending upon both the response rate and the growth rate differences, the microbial population can evolve in time into a homogeneous phase, in which all bacteria produce to the same degree, or into a heterogeneous phase, in which the population splits into two subpopulations with different production degrees~\cite{Bauer2017}. 
These theoretical findings might explain recent experiments, and challenge currently accepted views on phenotypic heterogeneity in quorum-sensing microbial populations.\\
Previously, observations of numerical simulations of the quorum-sensing model were explained by the analysis of an effective mean-field equation~\eqref{eq:mean-field}~\cite{Bauer2017}.
This mean-field equation describes the temporal evolution of the distribution of  production degrees in the population.
The derivation of this effective description included two steps: 
(i) the derivation of the temporal evolution equation of the reduced one-particle density $\rho^{(1)}$ for the production degrees in the population, which was obtained from the continuous-time Markov process describing the dynamics of the single-cell production degrees;
(ii) the assumption of a mean-field density $\rho$ and the heuristic guess of the mean-field equation~\eqref{eq:mean-field} that governs the temporal evolution of $\rho$. 

Here, we prove that and quantify how the stochastic process of the quorum-sensing model converges to the mean-field equation~\eqref{eq:mean-field} as the population size grows to infinity ($N\to \infty$); see Figure~\ref{fig:outline_proof}. 
More precisely, we establish that for any finite time $t$, the empirical density of the underlying microscopic process $\rho^{(1)}_N$ converges in probability to the mean-field density $\rho$ of the effective dynamics as $N\to \infty$ if initial correlations are not too strong, see Theorem~\ref{thm:main}.
In other words, our proof yields concrete error bounds for the quality and the speed of convergence of the stochastic process to the mean-field equation. 
These error bounds depend upon both the population size and the initial correlations.
%Such a proof of the effective mean-field description is not trivial because both the law of large numbers and the propagation of errors building up due to correlations of the individuals' production degrees need to be controlled.
To show convergence, we introduced an auxiliary stochastic mean-field process that mimics the temporal evolution of the mean-field equation~\eqref{eq:mean-field} and updates the individuals' production degrees in an independent manner; see Figure~\ref{fig:macroscopic}.
This way, the law of large numbers is separated from the propagation of errors that build up due to correlations between the production degrees of the individuals, and enables to prove convergence towards mean-field as $N\to \infty$.

%%%%%%%%%%%%%%%%%%%%%%%%%%%%%%%%%%%%%%%%%%%%%%%%%%%%%%%%%%%%%
\paragraph{Significance of our results in the biological context.}
The proof presented in this manuscript also shows that the convergence to the mean-field equation~\eqref{eq:mean-field} is robust against changes of microscopic details in the definition of the quorum-sensing model.
In particular, convergence to mean-field does not depend upon the specific choice of the fitness function and the implemented way by which cells sense and respond to the average production level of the public good in the environment. 
Previously, it was already shown that the occurrence of heterogeneous phases of the population depends only upon the qualitative behavior of both the fitness function and the response function (that is, their respective fixed point structure)~\cite{Bauer2017}.
Together with the results of this work, these robustness properties are important for the applicability of the quorum-sensing model to a biological, \textit{in vivo} or \textit{in vitro}, experiment, for which access to the qualitative form of both fitness and response might be possible~\cite{Ruparell2016,Diggle2007,He2003,Williams2008,Drees2014,Hense2015,Maire2015}.
%[Furthermore, it is only the qualitative behavior of fitness function and response function that one would hope to have access to in a biological experiment.]
In total, the convergence to mean-field presented in Theorem~\ref{thm:main} and its proof support the biological relevance of the quorum-sensing model for phenotypic heterogeneity in quorum-sensing microbial populations.

%%%%%%%%%%%%%%%%%%%%%%%%%%%%%%%%%%%%%%%%%%%%%%%%%%%%%%%%%%%%%
\paragraph{Significance of our applied method for other stochastic many-particle models.}
Furthermore, we expect that our developed method of an auxiliary stochastic mean-field process could also be helpful to prove the convergence of other stochastic many-particle processes towards their respective mean-field equations. 
Birth and death processes in discrete systems -- for example on networks -- are relevant in many applications describing physical, biological, or social systems. In some of these systems, a rigorous proof of the validity of an effective description could be possible using the techniques of this manuscript.   

The idea of using an auxiliary process for proving mean-field limits for many-body processes has already been successfully implemented in the continuous case, for example, in~\cite{Boers2015,Huang2017,Canizares2017}.
By combining these ideas with the technique of the present paper, it should be possible to generalize most of these results to the case including a random birth- and/or death process. 
For example, in reference~\cite{Huang2017,Canizares2017}, the Keller-Segel equation is derived from a microscopic model describing the continuous motion of interacting amoebas by comparing their  microscopic  dynamics to an auxiliary system, defined by the trajectories that follow the mean-field flow. 
However, discrete gain and loss terms naturally occur in the description of life cycles in a colony of amoebas, which can be described by adding a birth and/or death process to the microscopic model. We envision that it should be possible to obtain and prove a Keller-Segel-like equation including a gain and/or loss term as effective description by using an auxiliary stochastic mean-field process as presented in this manuscript.

%%%%%%%%%%%%%%%%%%%%%%%%%%%%%%%%%%%%%%%%%%%%%%%%%%%%%%%%%%%%%
\paragraph{Outline of the manuscript.}
This manuscript is organized as follows. 
In Section~\ref{sec:model}, we introduce the quorum-sensing model as a stochastic many-particle process to describe the temporal evolution of quorum-sensing microbial populations; Figure~\ref{fig:model}  illustrates the model set-up, both in the individual-based and population-based description. We provide a brief overview of previous numerical and analytical results of the quorum-sensing model, and explain why this model might have biological relevance in the context of phenotypic heterogeneity of autoinducer production in quorum-sensing microbial populations. 
The mean-field equation~\eqref{eq:mean-field} is introduced as an effective description of the temporal evolution of quorum-sensing microbial populations, which explains the observed phenotypic heterogeneity.
Our main result for the convergence of the empirical one-particle density (derived from the microscopic, stochastic many-particle process) towards the mean-field density (the effective  or macroscopic process) is presented in Theorem~\ref{thm:main} in Section~\ref{sec:main_result}.
%To measure the distance between the microscopic system and the macroscopic system, we introduce the bounded Lipschitz metric, with which closeness is measured in a weak sense.
Figure~\ref{fig:outline_proof} in Section~\ref{sec:main_result} summarizes the key steps of the proof. 
Essential to the idea of the proof is the introduction of an auxiliary stochastic mean-field process, which is explained in Section~\ref{sec:auxiliary_process} and illustrated in Figure~\ref{fig:macroscopic}. 
The details of the proof are explained in Section~\ref{sec:proof}.

%%%%%%%%%%%%%%%%%%%%%%%%%%%%%%%%%%%%%%%%%%%%%%%%%%%%%%%%%%%%%
%%%%%%%%%%%%%%%%%%%%%%%%%%%%%%%%%%%%%%%%%%%%%%%%%%%%%%%%%%%%%
\section{The quorum-sensing model: a stochastic many-particle process for the temporal evolution of quorum-sensing microbial populations}
\label{sec:model}

We now introduce the set-up of the quorum-sensing model in the individual-based description by closely following along the lines of reference~\cite{Bauer2017}.

\begin{figure}[htb!]
  \centering
  \includegraphics[width=0.99\textwidth]{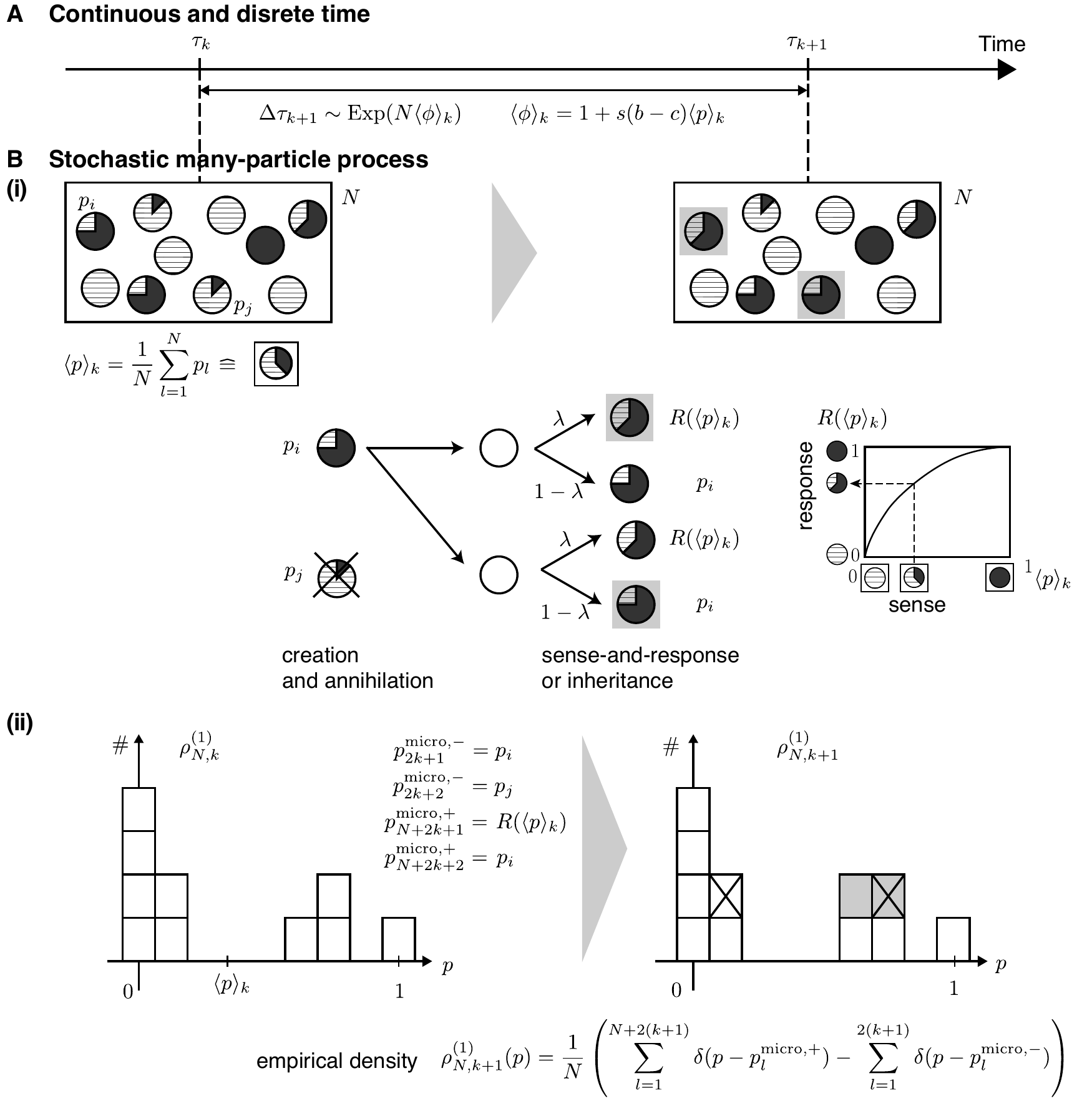}
  \caption{
Set-up of the quorum-sensing model as a stochastic many particle process.
(A) Update steps of the population occur continuously in time after an exponentially distributed waiting time $\Delta \tau_{k+1}$ that depends on the fitness of the population $N\langle\phi\rangle_k$. The discrete label $k$ is approximately a function of the continuous time~$t$; see Section~\ref{sec:microscopic_revisited} and Lemma~\ref{lem:time}. 
(B) Sketch of one update step in the quorum-sensing model. 
(i) Individual-based description. Individuals are depicted as disks and are characterized by their production degree $p_i\in [0,1]$ (indicated by fraction of dark gray filling; non-producers with $p_i=0$ are indicated by line pattern and reproduce fastest, full producers with $p_i=1$ are depicted in dark gray and reproduce slowest, see Equation~\eqref{eq:fitness}).
At an update step, individual $i$ (here $p_i=3/4$) divides into two offspring individuals, one of which replaces an individual $j$ that is randomly chosen with probability $1/(N-1)$.
Each offspring individual senses the average production level in the population, $\langle p\rangle_k$ at time $\tau_k$, and either adopts, with probability $\lambda$, the value $R(\langle p\rangle_k)$ of the response function as its production degree or, with probability $1- \lambda$, inherits the production degree from the ancestor. Here, offspring individual $i$ inherits and $j$ adapts (denoted by light gray background). 
(ii) Population-based description. A single realization of the stochastic many-particle process can be reformulated at the level of the empirical (one-particle) density $\rho^{(1)}_N$, that is, the histogram of production degrees; see Section~\ref{sec:microscopic_revisited}. At the $(k+1)^\text{th}$ update step, the histogram changes due to two annihilated ($p^{\text{micro},-}_{2k+1}, p^{\text{micro},-}_{2k+2}$) and two created ($p^{\text{micro},+}_{2k+1}, p^{\text{micro},+}_{2k+2}$) production degrees, whose probability distributions are defined such that they agree with the individual-based description; see Definition~\ref{defn:micro_process} and explanations in Section~\ref{sec:microscopic_revisited}.
}
  \label{fig:model}
\end{figure}
%

%%%%%%%%%%%%%%%%%%%%%%%%%%%%%%%%%%%%%%%%%%%%%%%%%%%%%%%%%%%%%

\paragraph{State of the population.}
The quorum-sensing model is defined for a well-mixed population of $N$ individuals.  
Each individual $i=1,\dots, N$ is characterized by its production degree $p_i\in[0,1]$. 
The value of $p_i$ denotes the extent to which an individual produces the public good. This public good is secreted into the well-mixed population and, thus, becomes equally shared amongst all individuals of the population.
The limiting case $p_i=0$ denotes a non-producer and $p_i=1$ denotes a full producer. 
The state of the population at a given time is characterized by the collection of the production degrees of all $N$ individuals $P=(p_1, \dots,p_N)$.

\paragraph{A stochastic birth-death process governs the temporal evolution of the state of the population.}
 The state of the population $P$ changes through a continuous-time Markov process; see Figure~\ref{fig:model} for an illustration. The state $P$ is updated through both reproduction of individuals and a sense-and-response mechanism, such that at most two individuals $i$ and $j\neq i$ change their production degree at one time; see Figure~\ref{fig:model}(B). In the following, these processes that change the state of the population are explained.
\paragraph{Reproduction.}
An individual~$i$ replicates randomly (referred to as a birth or creation event) after a time that is exponentially distributed with rate $\phi_i$; see Figure~\ref{fig:model}(A). 
This replication rate $\phi_i$ is also referred to as the individual's fitness and depends on the individual's production degree and on the whole population as follows. 
On the one hand, fitness  decreases due to the individual's metabolic costs of production quantified by the value $cp_i$ (with cost unit $c\geq0$). In other words, non-producers replicate faster than producers. 
On the other hand, fitness increases with the available public good in the well-mixed population. Because the secreted public good is shared equally between all individuals, it is assumed that the fitness increases by the value $b\langle p\rangle$ (with benefit unit $b\geq 0$); 
here $\langle p\rangle= 1/N\sum_{i=1}^N p_i$ denotes the average production level in the population at that instant in time. In other words, the more producers are present in a population, the faster individuals replicate in that population.

To be explicit, the individual's fitness could be chosen as a linear function of costs and benefits: 
\begin{align}\label{eq:fitness}
\phi_i(P) =  1+s(b\langle p \rangle-cp_i)\ ,
\end{align}
and may be generalized to any function $\phi_i(P) =\phi(p_i, \langle p\rangle)$ that is bounded in Lipschitz norm (see definition in Equation~\eqref{eq:BL_norm} further below).
In the chosen fitness function~\eqref{eq:fitness}, the influence of the balance between costs and benefits is scaled by the selection strength $0<s<1/c$ and is added to the background fitness 1. 
The average per capita fitness is given by $\langle \phi\rangle = 1+ s(b-c)\langle p\rangle$.

Whenever an individual splits into two offspring individuals, another individual from the population is randomly selected with probability $1/(N-1)$ to die (referred to as a death or annihilation event) such that the population size $N$ remains constant. 
For the choice of the fitness function~\eqref{eq:fitness}, the time unit $\Delta t = 1$ means that, given a population consisting solely of non-producers, each individual will have replicated once on average. 

%One recovers the set-up of frequency-dependent Moran models for Darwinian selection~\cite{Moran1958, Ewens2004, Blythe2007,Nowak2004paper} if one restricts the production degrees to a discrete set, for example, to full producers or non-producers only, $p_i\in\{0,1\}$. 
%The well-known Prisoner's dilemma in evolutionary game theory is recovered if, in addition, the secreted molecule confers an explicit fitness benefit on the population~\cite{Nowak2004paper,Traulsen2005,Melbinger2010}.
%
\paragraph{Sense-and-response through quorum sensing.}
Furthermore, individuals may change their production degree via quorum sensing. 
For simplicity, we implemented sense-and-response as follows; see also Figure~\ref{fig:model}(B) for an illustration. 
At a reproduction event, both offspring individuals of ancestor $i$ sense the average production level $\langle p\rangle$ in the well-mixed population. 
With probability $\lambda \in [0,1]$, they independently adopt the value $R(\langle p\rangle) \in [0,1]$ as their production degree, whereas they inherit the ancestor's production degree with probability $1-\lambda$. 
We refer to the function $R(\langle p\rangle)$ as the response function (see Figure~\ref{fig:model}(B) for a sketch), which is the same for all individuals.
In this implementation of sense-and-response, the response probability can be thought of as a response rate measured in units of the reproduction rate.
We refer to the ability of an individual to sense and respond to a property of the whole population (such as the average production level $\langle p\rangle$) as quorum sensing.
Importantly for the quorum-sensing model, sense-and-response constitute a source of innovation in the space of production degrees because an individual may adopt a production degree that was not previously present in the population.

Central to the quorum-sensing model is the feature that individuals shape their environment (through the production and secretion of the public good) and respond to this self-shaped environment (by changing their individual production), in turn. 
In simple terms, every individual feels the average production of all other individuals and adjusts its individual production accordingly.
Thus, ecological and population dynamics are coupled in the quorum-sensing model. 
This coupling results in interesting collective dynamics that are  summarized in the following.

\paragraph{Phenomenology of the quorum-sensing model and biological significance.}
Previous numerical simulations and mathematical analysis of the quorum-sensing model showed that the coupling between ecological and population dynamics through quorum sensing may induce phenotypic heterogeneity in the production of public goods in microbial populations~\cite{Bauer2017}. 
These findings qualitatively explain recent experimental observations in microbial population dynamics~\cite{Garmyn2011,Grote2015} and challenge currently accepted views on the origin of phenotypic heterogeneity~\cite{Ackermann2015} in quorum-sensing microbial populations. 

More specifically, upon numerically simulating the stochastic process of the quorum-sensing model, one observes both homogeneous and heterogeneous (quasi-)stationary phases of the population depending on the chosen parameter values for the selection strength $s$ and response probability $\lambda$.
For a broad range of values for $s$ and $\lambda$, the population evolves into a state in which all individuals produce public goods to the same degree (referred to as homogeneous phase, $p_i = p_j$ for all $i$ and $j$). 
For fixed selection strength~$s$, such homogeneous phases are approached if $\lambda = 0$ or if $\lambda$ exceeds some threshold value. Homogeneous phases of the population are intuitively expected to occur at least for some parameter regimes because every individual in the quorum-sensing model senses the production of every other individual and responds accordingly to the average.
Notably, however, the coupling between ecological and population dynamics can also yield to a stable heterogeneous production of public good molecules in the population.
For intermediate values of the response probability $\lambda$, the population may split into two subpopulations: one with a low, and a second with a high production of public good molecules. 
This heterogeneity in the public good production is stable for many generations. 
At the same time, the overall production level~$\langle p\rangle$ is robustly self-regulated if individual production is up-regulated through the response function $R(\langle p\rangle)$.
Such heterogeneous phases arise for diverse initial states of the population.
The phase transitions between homogeneous and heterogeneous (quasi-)stationary phases occur if cellular response to the environment is absent ($\lambda = 0$) or too frequent (high values of~$\lambda$).
In total, if individuals sense and respond to their self-shaped environment, the population may not only respond as a homogeneous collective as is typically associated with quorum sensing, but may also become a robustly controlled collective of two different subpopulations~\cite{Bauer2017}.

These findings may have direct applications to microbial population dynamics for which so-called autoinducers can be understood as public goods molecules. Autoinducers are small signaling molecules that enable microbes to communicate with each other in terms of a chemical language. These autoinducers are secreted into the environment and are sensed by other microbes in the population, in turn. 
Upon responding to the sensed level of autoinducers in the environment, a coordinated gene expression of all cells of the population can be triggered. Such collective behavior of microbes is commonly referred to as ``quorum sensing'' and comprises, for example, the coordinated and collective expression of genes for virulence, biofilm formation, and bioluminescence. 
Recent experiments suggest that the production of autoinducers may vary between genetically identical cells in a population in that some cells of the population expressed autoinducer synthase genes during microbial growth, while others did not~\cite{Garmyn2011,Grote2015}. 
Such a phenomenon is referred to as phenotypic heterogeneity~\cite{Ackermann2015}.
The stable coexistence of different phenotypes in one population may serve the division of labor or act as a bet-hedging strategy and, thus, may be beneficial for the survival and resilience of a microbial species at long time scales.
However, the experimentally observed phenotypic heterogeneity in the autoinducer production is not expected to occur in well-mixed populations if currently favored threshold models for quorum-sensing response are adopted~\cite{Bauer2017}.
The phenomenology of the quorum-sensing model shows that a microbial population can, indeed, control phenotypic heterogeneity of autoinducer production and, concomitantly, tightly adjust the average production level in the population to trigger quorum-sensing functions such as virulence.
In other words, the quorum-sensing model might be relevant to explain how phenotypic heterogeneity in the production of autoinducers is established in quorum-sensing microbial populations.

\paragraph{Macroscopic (population-based) description and effective temporal evolution: mean-field equation~\eqref{eq:mean-field}.}
To describe the numerically observed quasi-stationary, heterogeneous phases of the population, we previously derived heuristically a macroscopic mean-field equation from the microscopic stochastic many-particle process. All details of both the heuristic derivation and the mathematical analysis can be found in the supplement of reference~\cite{Bauer2017}\footnote{Please note that the \textit{average} one-particle density was denoted as $\rho_N^{(1)}$ in reference~\cite{Bauer2017}, whereas it denotes the \textit{empirical} density in this manuscript; see Equation~\eqref{eq:empirical_one-particle_density_individual_based}.}. 
The central steps are outlined in the following to motivate the proof for the convergence of mean-field.

On a macroscopic (that is, population-based)  description level, the population is suitably characterized by the so-called reduced one-particle density, 
\begin{align}\label{eq:one-particle_density_individual_based}
\rho^{(1)}(p,t) = 1/N \left\langle\sum_{i=1}^N\delta(p-p_i)\right\rangle_{\mathrm{P}(P,t)} \ ,%_{P(P,t)} %_P
\end{align}
in the spirit of a kinetic theory~\cite{Kadar2011,Spohn1991}, in which $\rho^{(1)}$ denotes the probability distribution of finding \textit{any} individual at a specified production degree $p$ at time $t$; see Figure~\ref{fig:model}(B)(ii). 
We also refer to $\rho^{(1)}$ as the \textit{average density} of the microscopic process as opposed to the \textit{empirical density}:
\begin{align}\label{eq:empirical_one-particle_density_individual_based}
\rho_N^{(1)}(p,t) = 1/N\sum_{i=1}^N\delta(p-p_i)\ ,\end{align}
which is the histogram of production degrees of a single realization of the stochastic process at time $t$; see also Definition~\ref{defn:empdens} for a different formulation.
The average in the definition of the average density is taken over the joint $N$-particle probability distribution $\mathrm{P}(P,t)$. In other words, the value $\mathrm{P}(P,t)\mathrm{d}{p_1}\dots\mathrm{d}{p_N}$ denotes the joint probability of finding the first individual with a production degree in the interval $[p_1, p_1+\mathrm{d}{p_1}]$, the second individual with a production degree in the interval $[p_2, p_2+\mathrm{d}{p_2}]$, and so on at time $t$.
The temporal evolution of $\mathrm{P}(P,t)$ is governed by a master equation for the stochastic many-particle process~\cite{Gardiner,VanKampen2007,Weber2017}, which follows from the definition of the quorum-sensing model and tracks the correlated microscopic dynamics of the production degrees of all $N$ individuals. 
The temporal evolution of $\rho^{(1)}$ is derived from the master equation for $\mathrm{P}$; see~\cite{Bauer2017} for details. 
The average density $\rho^{(1)}$ may be approximated by the mean-field density $\rho$ if one naively assumes that correlations are negligible. Under this mean-field assumption, $\rho$ evolves according to the mean-field equation:
\begin{equation}\label{eq:mean-field}
\partial_t \rho(p, t) = 
2\lambda \langle \phi\rangle_{\rho_t}\big(\delta(p-R(\langle p\rangle_{\rho_t}))-\rho(p,t)\big)+ (1-2\lambda)\big(\phi(p, \langle p\rangle_{\rho_t})-\langle \phi\rangle_{\rho_t}\big)\rho(p,t)\ .
\end{equation}
Here, we abbreviated $\langle \phi\rangle_{\rho_t} = \int_0^1\d p\ \phi(p) \rho(p,t)$, and analogously for $\langle p\rangle_{\rho_t}$. 
The mean-field equation~(\ref{eq:mean-field}) conserves normalization of $\rho$, that is, $\int\d p\ \partial_t \rho(p,t) = 0$.
Note that the mean-field equation~\eqref{eq:mean-field} is to be understood in distributional sense, that is, it needs to be integrated over observables (for example, suitable test functions $g: [0,1]\to \mathbb{R}$ with bounded Lipschitz norm~\eqref{eq:BL_norm}) and $\rho$ is interpreted as a linear functional on the space of these observables. 
Two terms contribute to the mean-field equation~(\ref{eq:mean-field}) and determine  how the distribution of production degrees in the population evolves in time: the sense-and-response term with prefactor $2\lambda$, and the replicator term with prefactor $1-2\lambda$. 
When quorum sensing is absent ($\lambda = 0$), the sense-and-response term vanishes and Equation~(\ref{eq:mean-field}) reduces to the well-known replicator equation of the continuous Prisoner's dilemma~\cite{Bomze1990,Oechssler2001,Hofbauer2003,Cressman2005,McGill2007}. 
In general, the replicator term determines how  probability weight at production degree $p$ changes if the fitness $\phi(p)$ is different from the mean fitness in the population $\langle \phi\rangle_{\rho_t}$. The sign of the contribution of the replicator term changes when exactly one of the two offspring individuals adapts on average ($\lambda = 1/2$). 
The sense-and-response term, on the other hand, encodes the ecological feedback by which individuals sense the average $\langle p\rangle_{\rho_t}$ and adopt the production degree $R(\langle p\rangle_{\rho_t})$ in response. The change in $\rho$ at a certain production degree is determined by the difference between the current state $\rho$ and the state in which all individuals have this production degree $R(\langle p\rangle_{\rho_t})$. 

The analysis of the mean-field equation~\eqref{eq:mean-field} explains both homogeneous and heterogeneous states of the population~\cite{Bauer2017}. Depending on how growth rate differences between producers and non-producers (quantified by the selection strength $s$) balance with the response rate to the environment (quantified by the response probability $\lambda$), homogeneous (unimodal) or heterogeneous (bimodal) stationary densities are approached at long times in the mean-field equation~\eqref{eq:mean-field}.
The existence and the stability of heterogeneous stationary densities is a consequence of the feedback between ecological and population dynamics.
In total, the analysis of the mean-field equation~\eqref{eq:mean-field} shows that phenotypic heterogeneity arises dynamically in the quorum-sensing model and that it is robust both against changes in the definition of the stochastic many-particle process (how up-regulation and growth rate differences are implemented), and against perturbations and demographic noise of the stochastic dynamics.

%%%%%%%%%%%%%%%%%%%%%%%%%%%%%%%%%%%%%%%%%%%%%%%%%%%%%%%%%%%%%
\section{Main result of this work: Convergence to mean-field for large $N$}
\label{sec:main_result}

\paragraph{Purpose of this manuscript.}
In this manuscript, we prove that for any time $t>0$ the empirical density $\rho_N^{(1)}(t)$ of the stochastic many-particle process (microscopic dynamics / process) converges in probability towards the mean-field density  $\rho$ (macroscopic dynamics / process) as the number of individuals becomes large and if initial correlations are not too strong; see Figure~\ref{fig:outline_proof} for an overview. 
In other words, the mean-field equation~(\ref{eq:mean-field}) exactly describes the collective dynamics of the stochastic many-particle process of the quorum-sensing model as $N\to \infty$.

\paragraph{Closeness between microscopic and macroscopic process, and convergence in probability.} First, let us define the notion of closeness between the microscopic and the macroscopic process, and formulate our main result.
Since the empirical density $\rho^{(1)}_N$ is a a sum of delta functions (a histogram) while the mean-field density $\rho$ is a continuous function, closeness of the two can only hold in a weak sense. 
Consequently, we introduce a weak notion of distance between functionals (for example, $\rho^{(1)}_N$ and $\rho$) in $(L^\infty)^*$, which denotes the dual space of $L^\infty$ (the space of all essentially bounded measurable functions).
We then measure distances between probability distributions with the bounded Lipschitz metric, which is based on the bounded Lipschitz norm defined as follows. 
Defining the Lipschitz norm of a function $f\in C[0,1]$ (the space of all continuous functions on $[0,1]$) as:
\begin{align}\nonumber
	\|f\|_L\coloneqq\sup_{x,y\in[0,1]}\frac{|f(x)-f(y)|}{|x-y|}\ ,
\end{align}
the bounded Lipschitz norm of any functional $g\in(L^\infty)^*$ is given by:
\begin{align}\label{eq:BL_norm}
\|g\|_{BL}\coloneqq\sup_{\|f\|_L=1;f(0)=0}\left|\int_0^1 \d p\ f(p)g(p)\right|\ .
\end{align}
Note that $(L^\infty)^*$ can be identified with the space of all finitely additive finite signed measures. 
Furthermore, any normalized and positive $g\in(L^\infty)^*$ (such that $\int_0^1\d p\ g(p)=1$) can naturally be identified with a probability distribution. For such a normalized and positive $g$ one may drop the boundary condition $f(0)=0$ in the definition of the bounded Lipschitz norm.
Note that
\begin{align}\label{eq:BLL1}
\|g\|_{BL}\leq \left|\int_0^1\d p\ p|g(p)|\right|\leq\|g\|_1 =  \int_0^1\d p\ |g(p)|\ .
\end{align}
The bounded Lipschitz metric measures the distance between two functionals $g$ and $h\in(L^\infty)^*$ as:
\begin{align}\label{eq:BL_metric}
d(g,h) \coloneqq \|g-h\|_{BL} = \sup_{\|f\|_L=1;f(0)=0} \left| \int\d p\ f(p)g(p)-\int\d p\ f(p)h(p) \right|\ .
\end{align}

Furthermore, the convergence of the empirical density of the microscopic process $\rho^{(1)}_N$ against the solution $\rho$ of the macroscopic mean-field equation~\eqref{eq:mean-field} can at best hold in a probabilistic sense: With very small probability, always the same individual might be reproduced in a single realization of the stochastic process. Such a realization would lead to a big deviation from the solution of the mean-field equation~\eqref{eq:mean-field}. However, the occurrence of such a trajectory is improbable.
To capture this intuition in mathematical terms, we define convergence in probability as follows:
%
%%%%
\begin{defn}\label{def:inprob}
Let $(\nu_{N})_N$ be a sequence of probability densities and $\nu$ be a probability density. 
We write:
\begin{align}\nonumber
\nu_{N}\tos\nu\ ,\quad \text{if for any } \epsilon>0:\
\lim_{N\to\infty} \mathbb{P}\left(d (\nu_{N},\nu)>\epsilon\right)=0\ .
\end{align}
\end{defn}
%%%%
%
With this notion of convergence in probability, the main result of this manuscript is formulated as follows:
%
%%%%
\begin{thm}\label{thm:main}
Let $\rho_N^{(1)}(t)$ be the empirical one-particle density of the stochastic many-particle process (the microscopic process) and $\rho(t)$ a solution of the mean-field equation~\eqref{eq:mean-field} (the macroscopic process). 
We assume for the initial densities that $\rho_N^{(1)}(0)\tos \rho(0)$.
Then: 
\begin{align}\nonumber
\rho_N^{(1)}(t) \tos  \rho(t) \quad \text{ for any } t>0\ .
\end{align} 
\end{thm}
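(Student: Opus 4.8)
The plan is to prove convergence by inserting the auxiliary stochastic mean-field process of Section~\ref{sec:auxiliary_process} as an intermediary and splitting the distance with the triangle inequality for the bounded Lipschitz metric~\eqref{eq:BL_metric}. Writing $\rho_N^{\mathrm{aux}}(t)$ for the empirical one-particle density of the auxiliary process, I would estimate
\begin{align}\nonumber
d\big(\rho_N^{(1)}(t),\rho(t)\big)\leq d\big(\rho_N^{(1)}(t),\rho_N^{\mathrm{aux}}(t)\big)+d\big(\rho_N^{\mathrm{aux}}(t),\rho(t)\big)\ ,
\end{align}
and show that each of the two terms on the right tends to zero in probability as $N\to\infty$. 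The first term measures the propagation of correlation-induced errors between the true microscopic dynamics and its independent-sampling surrogate; the second term is a pure law-of-large-numbers statement. The whole point of introducing the auxiliary process is that these two mechanisms can then be controlled separately.

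For the law-of-large-numbers term $d(\rho_N^{\mathrm{aux}}(t),\rho(t))$, I would exploit that, by construction, the auxiliary process updates the individuals' production degrees independently using the \emph{deterministic} mean-field quantities $\langle p\rangle_{\rho_t}$ and $R(\langle p\rangle_{\rho_t})$ rather than the empirical ones, and that it is initialized (like the microscopic process) from a configuration whose empirical density converges to $\rho(0)$ by the hypothesis $\rho_N^{(1)}(0)\tos\rho(0)$. At any fixed $t$ the $N$ production degrees are therefore (conditionally) independent with common law $\rho(t)$, so $\rho_N^{\mathrm{aux}}(t)$ is an empirical average of independent samples of the mean-field law on the compact interval $[0,1]$. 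A variance estimate in the bounded Lipschitz metric, together with Chebyshev's inequality, then yields $\mathbb{E}\,d(\rho_N^{\mathrm{aux}}(t),\rho(t))\to 0$ and hence convergence in probability in the sense of Definition~\ref{def:inprob}; here the bound $d\leq\|\cdot\|_1$ from~\eqref{eq:BLL1} and compactness of $[0,1]$ keep the estimate elementary.

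For the coupling term $d(\rho_N^{(1)}(t),\rho_N^{\mathrm{aux}}(t))$, I would run the microscopic and auxiliary processes on a common probability space, starting from the \emph{same} initial configuration (so the term vanishes at $t=0$) and sharing as much randomness as possible at every update step: the same waiting times, the same choice of reproducing and dying individuals, and the same sense-or-inherit coin flips. The only source of discrepancy at a given step is then that the microscopic process feeds its \emph{empirical} averages into the fitness~\eqref{eq:fitness} and the response target $R$, whereas the auxiliary process feeds in the mean-field averages. Because $\phi$ and $R$ are bounded in Lipschitz norm, the resulting per-step deviation of the created and annihilated production degrees is controlled by the current distance $d(\rho_N^{(1)},\rho_N^{\mathrm{aux}})$ plus the already-controlled law-of-large-numbers error $d(\rho_N^{\mathrm{aux}},\rho)$. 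Since each update alters only two of the $N$ particles, a single step changes the empirical densities by $\mathcal{O}(1/N)$ in the metric; working at the level of the discrete update index $k$ and taking expectations, I would derive a recursion of the form $\mathbb{E}\,D_{k+1}\leq(1+C/N)\,\mathbb{E}\,D_k+\varepsilon_N/N$, where $D_k\coloneqq d(\rho_N^{(1)},\rho_N^{\mathrm{aux}})$ at step $k$ and $\varepsilon_N\to 0$. A discrete Gronwall argument over the $\mathcal{O}(Nt)$ update steps that fit into time $t$ then gives $\mathbb{E}\,D\leq e^{Ct}(\mathbb{E}\,D_0+\varepsilon_N)\to 0$.

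Finally, I would translate the estimate from the discrete update index $k$ back to continuous time $t$ using Lemma~\ref{lem:time}, which certifies that $k$ is, with high probability, close to a deterministic function of $t$ (the number of reproduction events up to time $t$ concentrates because the total rate $N\langle\phi\rangle$ is bounded above and below via the Lipschitz bound on $\phi$). The main obstacle I anticipate is the coupling step: making the per-step comparison rigorous requires an explicit, near-optimal coupling of the two Markov updates so that the discrepancy is genuinely proportional to $D_k$ rather than to a constant, and it requires handling the singular sense-and-response target (the Dirac mass at $R(\langle p\rangle_{\rho_t})$ in~\eqref{eq:mean-field}) within the weak, distributional framework of the bounded Lipschitz metric. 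Controlling the accumulation of these $\mathcal{O}(1/N)$ errors over a number of steps that itself grows like $N$, without the prefactor $e^{Ct}$ causing a blow-up, is exactly where the separation of the law of large numbers from the propagation of correlations --- the key device of this paper --- does the essential work.
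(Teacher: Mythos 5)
Your overall architecture coincides with the paper's: insert the auxiliary process as an intermediary, control the microscopic-versus-auxiliary distance by a discrete Gr\"onwall recursion, control the auxiliary-versus-mean-field distance by a law of large numbers, and recover continuous time via Lemma~\ref{lem:time}. Your Gr\"onwall half is essentially the paper's Lemma~\ref{lem:micro}: the recursion $\mathbb{E}\,D_{k+1}\leq(1+C/N)\,\mathbb{E}\,D_k+\varepsilon_N/N$ is exactly what is derived there, and the ``near-optimal coupling'' you correctly flag as the main obstacle is supplied by the paper's Lemma~\ref{lem:Xbyd}: both processes are driven by the same uniforms $\omega^k$ fed into quantile functions, and $\mathbb{E}\left(|X^{\nu}-X^{\psi}|\right)\leq d(\nu,\psi)$ (proved via Laisant's formula and integration by parts) is what makes the per-step discrepancy proportional to $D_k$.

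The genuine gap is in your law-of-large-numbers step. It is not true that at fixed $t$ the auxiliary particles are ``(conditionally) independent with common law $\rho(t)$.'' First, the initial configuration is arbitrary (only $\rho_N^{(1)}(0)\tos\rho(0)$ is assumed), so the initial particles are not samples from $\rho(0)$; the LLN bound must carry the initial error additively, as in Lemma~\ref{lem:LLN}. Second, the particles created at step $k$ have step-dependent laws built from $\Phi(\eta^\text{aux}_{k})$ and $\eta^\text{aux}_{k}$ (not $\rho(t)$), and the two offspring of one step are mutually dependent through $p_{2k-1}^{\text{aux},-}$. Third --- and this is the paper's central device, which your proposal misses --- independence \emph{across} update steps holds only because annihilation in the auxiliary process is implemented as creation of a particle with \emph{negative mass} sampled from the deterministic density (Definition~\ref{def:aux}), at the price of $\eta^\text{aux}_{N,k}$ becoming a signed measure. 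Your coupling prescription (``the same choice of reproducing and dying individuals'') implicitly has the auxiliary process kill actually existing particles; then the annihilation law depends on the realization, and exactly the across-step independence that your variance/Chebyshev argument needs is destroyed. The correct statement is an LLN for sums of independent, non-identically distributed, signed increments, which the paper proves by partitioning $[0,1]$ into $K^{1/4}$ intervals and bounding the variance on each (Lemma~\ref{lem:LLN}, giving $K^{3/4}/N$ plus the initial error). Finally, your two-term splitting forces this same term to also absorb the discretization error (the auxiliary process jumps at discrete random times while $\rho_t$ flows continuously) and the random time-synchronization error; the paper isolates these in a separate third term, $\mathbb{E}\left(d(\eta^\text{aux}_{\kappa(t)},\rho_t)\right)$, handled by the evolution operator $T_t$ and Gr\"onwall (Lemmas~\ref{lem:Trho} and~\ref{lem:macro}) together with Lemma~\ref{lem:time}. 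Without the negative-mass construction and the independent-increment LLN, the separation of the law of large numbers from the error propagation --- which you rightly call the essential device --- does not actually go through.
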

%%%% 
%
It is not surprising, that one of the crucial steps in proving our result makes use of the law of large numbers. 
However, controlling the propagation of errors, which build up by neglecting correlations of the individuals' production degrees, with mathematical rigor is not trivial. 
The skeleton of our proof is summarized in Figure~\ref{fig:outline_proof} and outlined in the following.

%%%%%%%%%%%%%%%%%%%%%%%%%%%%%%%%%%%%%%%%%%%%%%%%%%%%%%%%%%%%%%%%%%%%
%
\begin{figure}[htb!]
  \centering
  \includegraphics[width=\textwidth]{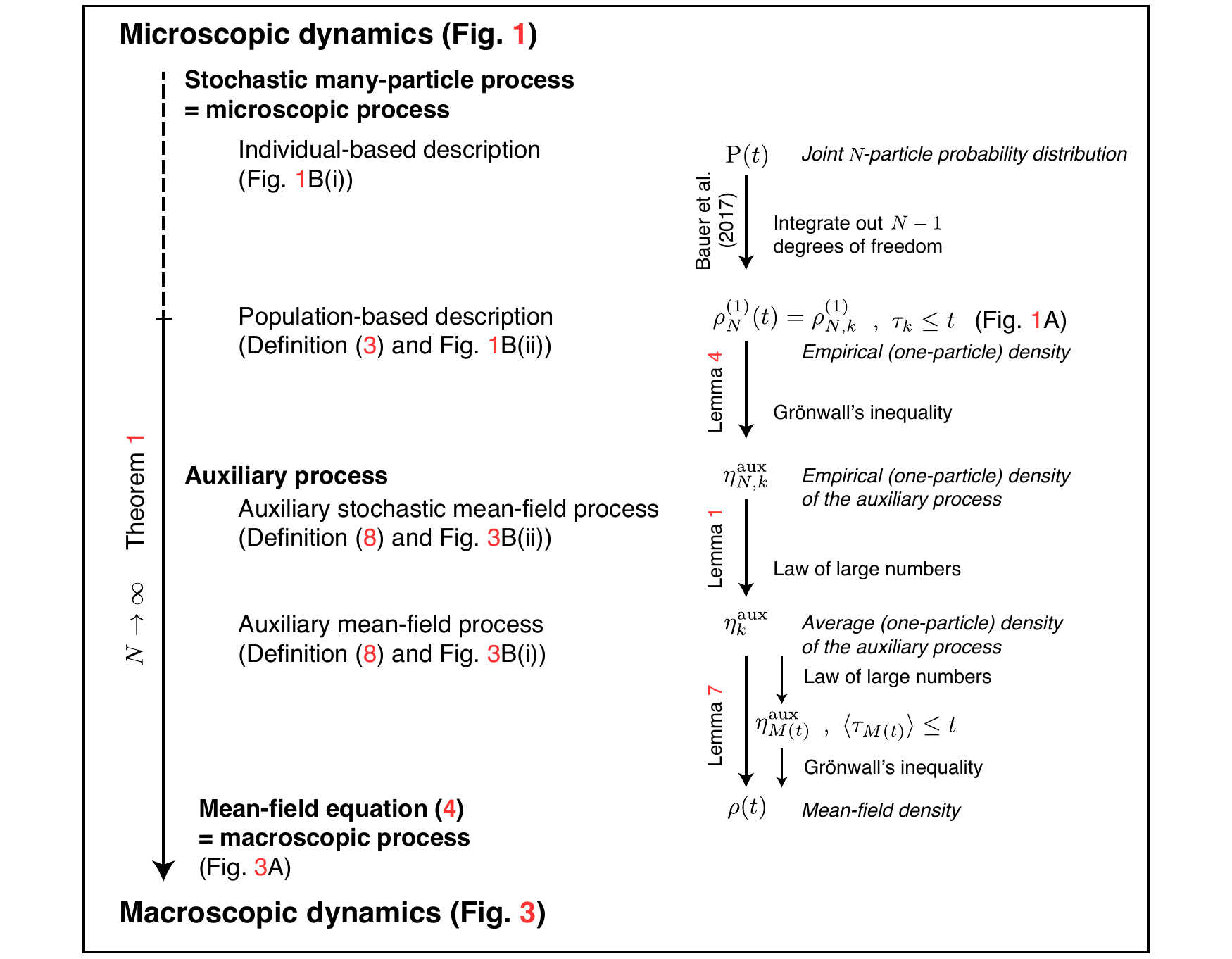}
  \caption{
Sketch of the main steps of the proof for the convergence towards mean-field in the quorum-sensing model. 
We prove that the microscopic description of the stochastic many-particle process (see Figure~\ref{fig:model}) converges to the macroscopic description of the quorum-sensing model (given by the mean-field equation~\eqref{eq:mean-field}) as $N\to\infty$.
More precisely, we establish that the empirical density of the microscopic process $\rho^{(1)}_N$ converges in probability to the macroscopic mean-field density $\rho$ as $N\to \infty$ if initial correlations are not too strong; see Theorem~\ref{thm:main}. 
The steps of the proof are summarized on the right hand side of the sketch.
The central idea is the introduction of an auxiliary  process, which mimics the time evolution of the mean-field equation as a stochastic process and updates the production degrees of the individuals in an independent manner between different update steps (``auxiliary stochastic mean-field process'' with probability density $\eta_{N,k}^\text{aux}$ at the $k^\text{th}$ update step). 
This way, arguments involving the law of large numbers can be separated from controlling the propagation of errors that build up due to correlations of the individuals' production degrees.
Along all arrows, we show weak convergence in probability (see Definition~\ref{def:inprob}). The central argument and the lemma, in which the respective convergence is proven, are written next to the according arrows. 
Empirical densities ($\rho_N^{(1)}$ and $\eta_N^\text{aux}$) are denoted by the subscript $N$ while average densities ($\rho$ and $\eta^\text{aux}$) do not carry a subscript.
We envision that our developed method to show convergence towards mean-field by introducing an auxiliary stochastic mean-field process may also  be helpful for other stochastic many-particle processes.
}
\label{fig:outline_proof}
\end{figure}
%
%%%%%%%%%%%%%%%%%%%%%%%%%%%%%%%%%%%%%%%%%%%%%%%%%%%%%%%%%%%%%%%%%%%%

%%%%%%%%%%%%%%%%%%%%%%%%%%%%%%%%%%%%%%%%%%%%%%%%%%%%%%%%%%%%%%%%%%%%
\paragraph{Outline of the proof.} 
The key idea of the proof is to separate the law of large numbers argument from the estimate of the error propagation by introducing an auxiliary stochastic mean-field process; see Figure~\ref{fig:macroscopic} for an illustration.
By virtue of the auxiliary process, individuals are created and annihilated in an explicitly independent manner between consecutive update steps such that the temporal evolution of the auxiliary process mimics the mean-field dynamics~\eqref{eq:mean-field} (see also Equation~\eqref{eq:mean-field_discrete}). 
The auxiliary process is characterized by the average density $\eta^\text{aux}$, and a single realization of the auxiliary process is denoted as $\eta^\text{aux}_N$ (the empirical density of the auxiliary stochastic mean-field process). 
Note that throughout the manuscript, we denote empirical densities ($\rho_N^{(1)}$ and $\eta_N^\text{aux}$) by the subscript $N$, whereas average densities ($\rho$ and $\eta^\text{aux}$) do not depend on the system size.

The idea of an auxiliary stochastic mean-field process with independent birth and death between consecutive update steps may seem paradox because the annihilation of an individual in the microscopic stochastic process always depends upon the actual state of the population: an individual with production degree $p$ can only be annihilated if it is existing. 
For the auxiliary process, however, we relax this condition of an actual existence of individuals: for the realization of the auxiliary process, we count an individual with a positive mass $+1$ at a birth event and an individual with a negative mass $-1$ at a death event. In other words, instead of creating/annihilating an individual, an individual with a positive/negative mass is created. 
This way, the empirical density ($\eta_N^\text{aux}$) may not be positive for all production degrees, but it is still normalized. 
The auxiliary process is implemented in such a way that we do not lose independence between consecutive update steps, see Definition~\ref{def:aux}.
As a consequence, the convergence of the empirical density of the auxiliary process $\eta^\text{aux}_N$ to the average density $\eta^\text{aux}$ as $N\to\infty$ is controlled with a ``standard'' law of large numbers argument (see Lemma~\ref{lem:LLN}). 

The propagation of errors for the convergence of the microscopic process ($\rho_N^{(1)}$) to the empirical auxiliary process ($\eta_N^\text{aux}$) is then controlled by Gr\"onwall's inequality (see Lemma~\ref{lem:micro}).
Gr\"onwall's inequality was also applied for the convergence of the average auxiliary process ($\eta^\text{aux}$) to the mean-field density $\rho$ (see Lemma~\ref{lem:macro}; the law of large numbers was implicitly applied as well).

Note also that most of the following sections do \textit{not} include the ``real'' time $t$ as a variable: the microscopic process, the auxiliary stochastic mean-field process, and the auxiliary mean-field process are synchronized in time. 
That is, we use $k = 0, 1, 2, \dots$ as a variable to count the ordinal number of creation/annihilation steps of the various processes; see Figure~\ref{fig:model}(A). In other words, $k$ labels the update steps.
Since the time intervals between two creation/annihilation processes are distributed independently, the discrete label $k$ is approximately a function of the continuous time $t$. 
Only when we compare the macroscopic process ($\rho$) with the auxiliary mean-field process ($\eta^\text{aux}$) is the time variable recovered, and convergence with respect to synchronization of time is controlled by a law of large numbers argument (see Lemma~\ref{lem:macro}).

%
%%%%
\begin{rem}
The basic idea of the proof is to estimate the expectation value of the distance $d(\rho_N^{(1)}(t),\rho(t))$ as follows:
\begin{align}\nonumber
\mathbb{E}\left(d(\rho_N^{(1)}(t),\rho(t))\right)
\leq Const(t)\cdot\left(\mathbb{E}\left(d(\eta^\text{aux}_{N,0},\rho_{0})\right)+ N^{-1/4}\right)\ ,
\end{align}
with some constant $0<Const<\infty$ for any chosen time $t>0$.

Applying Markov's inequality then establishes an error estimate of the convergence for any $t>0$:
\begin{align}\nonumber
\mathbb{P}\left(d(\rho_N^{(1)}(t),\rho(t))>\epsilon_N\right)
&\leq \frac{\mathbb{E}\left(d(\rho_N^{(1)}(t),\rho(t))\right)}{\epsilon_N}\ ,\\ \nonumber
&\leq \frac{Const(t)}{\epsilon_N}\cdot\left(\mathbb{E}\left(d(\eta^\text{aux}_{N,0},\rho_{0})\right) +  N^{-1/4}\right)\ .
\end{align} 
This estimate provides a quantitative control of the propagation of errors with respect to the population size $N$.
For example, if initial correlations vanish with $N$ as $\mathbb{E}\left(d(\eta^\text{aux}_{N,0},\rho_{0})\right)<Const\cdot N^{-1/8}$, the choice $\epsilon_N = N^{-1/8}$ yields the estimate:
\begin{align}\nonumber
\mathbb{P}\left(d(\rho_N^{(1)}(t),\rho(t))>N^{-1/8}\right)\leq
Const(t) \cdot  N^{-1/8}\ .
\end{align} 
This statement quantifies our intuitive reasoning from above: 
Realizations of the stochastic many-particle process whose one-particle density deviate significantly ($d(\rho_N^{(1)}(t),\rho(t)) >  N^{-1/8}$) from the solution of the mean-field equation~\eqref{eq:mean-field} can actually occur (as is also seen in numerical simulations of the quorum-sensing model).
However, the probability $\mathbb{P}$ of such an occurrence is bounded from above by $\mathbb{P}\leq Const(t) \cdot  N^{-1/8}$. 
Pictorially speaking, as the population size $N$ grows, such occurrences become less and less likely and the magnitude of such deviations becomes smaller and smaller.

\end{rem}
%%%%
%

%%%%%%%%%%%%%%%%%%%%%%%%%%%%%%%%%%%%%%%%%%%%%%%%%%%%%%%%%%%%%%%%%%%%
%%%%%%%%%%%%%%%%%%%%%%%%%%%%%%%%%%%%%%%%%%%%%%%%%%%%%%%%%%%%%%%%%%%%
\section{Introduction of the auxiliary stochastic mean-field process -- the central idea of the proof}
\label{sec:auxiliary_process}

%%%%%%%%%%%%%%%%%%%%%%%%%%%%%%%%%%%%%%%%%%%%%%%%%%%%%%%%%%%%%%%%%%%%
\subsection{The microscopic, stochastic many-particle process in a population-based description}
\label{sec:microscopic_revisited}

First, the set-up of the microscopic process is revisited before we define the auxiliary stochastic mean-field process in Definition~\ref{def:aux}. In the following, we provide an alternative formulation of the stochastic many-particle process in a population-based description that is suitable for our proof; see Figure~\ref{fig:model}(B) for a comparison if the individual-based and population-based description. 

As described above, an individual $i$ reproduces at rate $\phi_i$ (for illustration, we assume the form of the fitness in Equation~\eqref{eq:fitness}) that depends on both the individual's production degree $p_i$ and the average production level in the whole population $\langle p \rangle$. 
The time taken until the next reproduction event of individual $i$ occurs is exponentially distributed with mean $\phi_i$. In other words, the waiting time is sampled from the probability density $\phi_i e^{-\phi_i t}$. 
For our purposes it is useful to reformulate the stochastic process in the spirit of Gillespie's stochastic kinetic Monte Carlo method~\cite{Gillespie1976,Gillespie1977}.
Instead of randomly choosing a time of reproduction for every individual independently, one can choose the time steps at which \textit{some} individual of the population is reproduced randomly and, in a second step, define another random variable that selects \textit{which} of the individuals reproduces. 
This reformulation does not change the dynamics of the microscopic, stochastic many-particle process.
For this reformulation, let $\tau_0=0$ and $\tau_k$ with $k= 1, 2, \dots$ be the time at which for the $k^{\text{th}}$ time the configuration of the population is updated (the $k^{\text{th}}$ update step), that is, for the $k^{\text{th}}$ time an individual is created (and another individual is annihilated at the same time). 
The total rate of creating any individual is given by the sum of the fitnesses of all individuals: $\sum_{i=1}^N\phi_i = N\langle \phi\rangle = N(1+s(b-c)\langle p\rangle)$.
It follows that all time differences $\Delta \tau_{k+1}=\tau_{k+1}-\tau_k$ are exponentially distributed with mean $\mu_k$. 
Because the configuration is updated at the time $\tau_k$, also the fitness and, thus, the parameter $\mu_k$ depend on the update step $k$. 
Writing $\langle p\rangle_{k}$ for the average production degree at time $\tau_k$ (that is, $\langle p\rangle_{k}=\frac{1}{N}\sum_{i=1}^N p_i(\tau_k)$), it follows that $\mu_k=N(1+s(b-c)\langle p\rangle_{k}) = N\langle \phi\rangle_k$; see Figure~\ref{fig:model}(A).
%
%%%%
\begin{defn}\label{def:microtime}
 Let $\Delta\tau_{k+1}$ be the random variable for the length of the time interval between the $k^{\text{th}}$ and $(k+1)^{\text{th}}$ update step of the stochastic many-particle process of the quorum-sensing model, that is $\langle\Delta\tau_{k+1}\rangle\coloneqq 1/\mu_k = N^{-1} \left(1+s(b-c)\langle p\rangle_k\right)^{-1}$.  
We define $\tau_k\coloneqq \sum_{l=1}^{k-1} \Delta\tau_l$ as the update times and  $\langle \tau_k\rangle \coloneqq\sum_{l=1}^{k-1} \langle\Delta\tau_l \rangle$ as their according average. 
 Furthermore, we define $M(t)$ to be the maximal natural number such that $\langle \tau_{M(t)}\rangle \leq t$, and $\kappa(t)$ as the random variable given by the maximal number such that $\tau_{\kappa(t)}\leq t$.
Note that, due to the definition of the fitness, the number of update steps up to time $t$ scales linearly with $N$ on average, that is $M(t)\sim\mathcal{O}(N)$ such that $\langle \tau_{M(t)}\rangle\sim \mathcal{O}(N^0)$.
\end{defn}
%%%%
%

Having defined the update times $\tau_k$ for $k\geq 1$, we next define the random variables that select two new individuals with production degree $p^{\text{micro},+}_{N+2k-1}$ and $p^{\text{micro},+}_{N+2k}$ for creation, and two new individuals with production degree $p^{\text{micro},-}_{2k-1}$ and $p^{\text{micro},-}_{2k}$ for annihilation at time $\tau_k$. 
The random variables $p^{\text{micro},+}_{N+2k-1}$, $p^{\text{micro},+}_{N+2k}$,  $p^{\text{micro},-}_{2k-1}$, and $p^{\text{micro},-}_{2k}$ map from some probability space $\Omega^k$ onto the interval $[0,1]$ at update step $k$ for all $k \geq 1$. Before we define the probability space $\Omega^k$, we first define the probability distribution of production degrees that we consider in the microscopic process.

%\old{Having defined the update times $\tau_k$, we next define the random variables that select an individual with production degree $p^{\text{micro},+}_{N+k}$ for creation and another individual with production degree $p^{\text{micro},-}_{k}$ for annihilation at time $\tau_k$. 
%%
%$p^{\text{micro},+}_{N+k}$ and  $p^{\text{micro},-}_{k}$  are random variables for all update steps $k$, that is $p^{\text{micro},+}_{N+k}$ and $p^{\text{micro},-}_{k}$ map from some probability space $\Omega_k^\pm\to [0,1]$ \jnote{@Peter: be more specific? ``some probability space''?}.
%}

The population-based description of the microscopic process begins with a set of $N$ individuals and their according production degrees $p^{\text{micro},+}_1,\dots, p^{\text{micro},+}_N$.
The sequence of production degrees that were initially present and that have been created until time $\tau_k$ are denoted as $P^{\text{micro},+}_k\coloneqq(p^{\text{micro},+}_1,\dots,p^{\text{micro},+}_N, p^{\text{micro},+}_{N+1}, \dots,p^{\text{micro},+}_{N+2k})$; the  sequence of production degrees that have been annihilated until $\tau_k$ is denoted as $P^{\text{micro},-}_k\coloneqq(p^{\text{micro},-}_1,\dots,p^{\text{micro},-}_{2k})$. 
%
%%%%
\begin{defn}\label{defn:empdens}
For any pair of sequences $P^{\text{micro},+}_k$, $P^{\text{micro},-}_k$, the empirical one-particle density of the microscopic process after $k$ update steps is given by:
\begin{align}\nonumber
\rho^{(1)}_{N,k}(p) &= \rho^{(1)}_N(p; P^{\text{micro},+}_k, P^{\text{micro},-}_k)\\
&\coloneqq\frac{1}{N}\left(\sum_{j=1}^{N+2k}\delta(p-p^{\text{micro},+}_j)-\sum_{j=1}^{2k}\delta(p-p^{\text{micro},-}_j)\right)\ .
\end{align}
\end{defn}
%%%%
%
Note that $\rho^{(1)}_{N,k}$ is positive and fulfils $\int \d p\ \rho^{(1)}_{N,k}=1$ for all update steps $k$. Thus, $\rho^{(1)}_{N,k}$ has the form of a probability distribution for all $k$. 
Note also that only existing particles may be annihilated in the microscopic process. 
On the other hand, in the auxiliary process (see Definition~\ref{def:aux} below), individuals may be created with a negative mass at any production degree according to the present density of particles. Therefore, the creation and annihilation of individuals at a certain production degree is independent of the previous existence of individuals at that production degree in the auxiliary process. 
This way, positivity of the empirical density ($\eta_N^\text{aux}$) is lost for the auxiliary process, but the normalization is still valid. 

For easier comparison of the random variables $p^{\text{micro},+}$ and $p^{\text{micro},-}$ of the  microscopic model with the yet to be defined random variables of the auxiliary model (see Definition~\ref{def:aux}), it is convenient to assume a constant probability density on $\Omega^k$ and choose the maps $p^{\text{micro},+}$ and $p^{\text{micro},-}$ in such a way that the creation and annihilation probabilities coincide with those of the microscopic process. 

To sample a random variable from an arbitrary probability density $\nu$, we use the following definition: 
%
%%%%
\begin{defn}\label{def:rv}
For any probability density $\nu\in(L^\infty)^\star$, we define the random variable $X^\nu$, with  $[0,1]\to X^\nu$, through the so-called quantile function:
\begin{align}
X^\nu(\theta)\coloneqq\inf\left\{x:\int_{0}^x \d p\ \nu(p) >\theta \right\} \ .
\end{align}
Note that, because probability distributions functions are continuous from the right, the infimum is in fact a minimum. 
Thus, the random variable $X^\nu(\theta)$, with $\theta$ being uniformly distributed on $[0,1]$, is the inverse function to the cumulative distribution function of $\nu$ that is given by $x\mapsto \int_{0}^x\d p\ \nu(p)$; see also Figure~\ref{fig:Laisant} (upper part) for an illustration. 
\end{defn}
%%%%
%
Following this definition, the random variable $X^\nu(\theta)$, with $\theta$ being uniformly distributed on $[0,1]$, has probability density $\nu$.
For a given probability density $\nu$, we also define the reproduction density that accounts for how production degrees change at an update step of the stochastic many-particle process defined by the quorum-sensing model.
%
%%%%
\begin{defn}\label{def:Phi}
For any probability density $\nu:[0,1]\to\mathbb{R}^+_0$, let $\Phi(\nu)$ be the reproduction density of the quorum-sensing model, that is the probability density given by:
\begin{align}\nonumber
\Phi(\nu)(p)\coloneqq 
2\lambda \delta\left(p-R(\langle p \rangle_\nu)\right)
+    (1-2\lambda)\frac{1+s\big(b\langle p \rangle_\nu-cp\big)}{1+s(b-c)\langle p \rangle_\nu}\nu(p)\ ,  
\end{align}
where we abbreviated the mean of $\nu$ as $\langle p \rangle_\nu=\int_{0}^1\d p\ p\ \nu(p)$ and used the fitness function in Equation~\eqref{eq:fitness}. 
\end{defn}
%%%%
%
The reproduction density consists of two parts: (i) response to the average by increasing probability mass at the production degree $R(\langle p \rangle_\nu)$ with prefactor $2\lambda$ (sense-and-response term), and (ii) reproduction according to relative fitness differences $\phi(p)/\phi(\overline{p})\cdot\nu(p)$ with prefactor $(1-2\lambda)$ (replicator term). 
For later purposes, we mention that the notion of the reproduction density $\Phi$ facilitates to rewrite the mean-field equation in discrete time steps as follows (compare with linearization of the mean-field equation~\eqref{eq:mean-field}):
\begin{equation}\label{eq:mean-field_discrete}
\rho_{\tau_{k+1}}(p) =  \rho_{\tau_{k}}(p) + \frac{1}{N}\frac{\Delta\tau_{k+1}}{\langle \Delta\tau_{k+1}\rangle_{\rho_{\tau_{k}}}}\big(\Phi(\rho_{\tau_{k}})(p)-\rho_{\tau_{k}}(p)\big)+\mathcal{O}(\Delta\tau_{k+1}^2).
\end{equation}

With these definitions, the microscopic, stochastic many-particle process is reformulated as follows.

%
%%%%
\begin{defn}Let the sample space $\Omega$ be given by the sequence $\Omega\coloneqq(\Omega^1,\Omega^2,\ldots)$ where the individual sample spaces $\Omega^k$ at update step $k$ are given by  $\Omega^k=(\Omega^k_1,\Omega^k_2,\Omega^k_3,\Omega^k_4)$ with $\Omega^k_1=\Omega^k_2=[0,1]$ and $\Omega^k_3=\Omega^k_4=\{0,1\}$.
We assume that all of the $\Omega^k_j$ are independent (both in the indices $j$ and $k$), that $\omega^k_1$ and $\omega^k_2$ are uniformly distributed on $[0,1]$, and that $\mathbb{P}(\omega^k_3=0)=\mathbb{P}(\omega^k_4=0)=\lambda$. We write $\omega^k = (\omega^k_1, \omega^k_2, \omega^k_3, \omega^k_4)$.
\end{defn} 
%%%%
%

We now use this sample space to reformulate the microscopic process of the quorum-sensing model. 
In the $k^\text{th}$ update step $\omega^k_1$ and $\omega^k_2$ determine the two individuals that are subsequently annihilated (with production degrees $p_{2k-1}^{\text{micro},-}$ and $p_{2k}^{\text{micro},-}$), and $\omega^k_3$ and $\omega^k_4$ determine the production degrees of the two created individuals ($p_{N+2k-1}^{\text{micro},+}$ and $p_{N+2k}^{\text{micro},+}$).
If $\omega^k_{3/4}=0$, then the first/second newly created individual attains the production degree given by the value $R(\langle p\rangle_{\nu})$; if $\omega^k_{3/4}=1$ then the first/second newly created individual takes over the production degree of the first annihilated individual, that is, it attains the production degree $p_{2k-1}^{\text{micro},-}$; see  Figure~\ref{fig:model}(B)(ii) for an illustration.
Using this sample space and Definition~\ref{def:rv}, the microscopic process of the quorum-sensing model can be reformulated as follows to reproduce the correct probability distribution of the individuals' production degrees:
%
%%%%
\begin{defn}\label{defn:micro_process}
Let $p_{j}^{\text{micro},+}$ be the initial production degree of the $j^{\text{th}}$ individual for $1\leq j\leq N$.
The random variables $p_{N+2k}^{\text{micro},+}$ and $p_{N+2k-1}^{\text{micro},+}$ denote the values of the production degrees of the two individuals that are created in the $k^{\text{th}}$ update step, and the random variables $p_{2k-1}^{\text{micro},-}$ and $p_{2k}^{\text{micro},-}$ denote the values of the production degrees of the two individuals that are annihilated in the $k^{\text{th}}$ update step. 
These random variables are given by: 
\begin{align}\nonumber
p_{2k-1}^{\text{micro},-}(\omega^k)&\coloneqq X^{\Phi(\rho^{(1)}_{N,k-1})}(\omega^k_1)\ ,\\\nonumber
p_{2k}^{\text{micro},-}(\omega^k)&\coloneqq X^{\rho^{(1)}_{N,k-1}-\delta(p-p_{2k-1}^{\text{micro},-}(\omega^k_1))}(\omega^k_2)\ ,\\\nonumber
p_{N+2k-1}^{\text{micro},+}(\omega^k)&\coloneqq p_{2k-1}^{\text{micro},-}\omega^k_3 + R(\langle p \rangle_{\rho^{(1)}_{N,k-1}}) (1-\omega^k_3)\ ,  \\\nonumber
p_{N+2k}^{\text{micro},+}(\omega^k)&\coloneqq p_{2k-1}^{\text{micro},-}\omega^k_4 + R(\langle p\rangle_{\rho^{(1)}_{N,k-1}}) (1-\omega^k_4)  \;.
\end{align}
Let $P^{\text{micro},+}_k\coloneqq(p^{\text{micro},+}_1,\dots,p^{\text{micro},+}_N, p^{\text{micro},+}_{N+1}, \dots,p^{\text{micro},+}_{N+2k})$ and  $P_k^{\text{micro},-}\coloneqq(p_1^{\text{micro},-},p^{\text{micro},-}_2,\dots,p_{2k}^{\text{micro},-})$.
Together with Definition~\ref{defn:empdens}, the empirical density of the microscopic process after $k = 1, 2, \dots$ update steps follows as:
\begin{align}\label{eq:empirical_one-particle_density}
\rho^{(1)}_{N,k}(p)= \rho^{(1)}_N(p; P^{\text{micro},+}_k, P^{\text{micro},-}_k)\;.
\end{align}
\end{defn}
%%%%
%
The values of the random variables at the update step $k$ depend upon the probability distribution of production degrees at the update step $k-1$. 
The definition above assures that only individuals present at $\tau_{k-1}$ can be chosen for annihilation and, thus, to inherit their production degree $p_{2k-1}^{\text{micro},-}$. 
It follows by induction that $\rho^{(1)}_{N,k}$ is in fact positive for all update steps $k$, as claimed above. 
 
Note that $\rho^{(1)}_{N}(t)$ denotes the empirical  density at time $t$, and $\rho^{(1)}_{N,k}$ denotes the empirical density after $k$ update steps of the coupled creation-annihilation (birth-death) process.
Thus, with Definition~\ref{def:microtime}, it follows that $\rho^{(1)}_N(t)=\rho^{(1)}_{N,\kappa(t)}$. 
%\ppnote{Note that $\rho^{(1)}_{N,k}$ is an empirical density, similar to  $\rho^{(1)}_{N,k}$. Of course it makes no difference that  instead of  annihilating particles at the death process  we created particles with negative ``masses'' when building up $\rho^{(1)}_{N,k}$. The only difference between those two is that
% $\rho^{(1)}_{N}(t)$ denotes the empirical  density at time $t$, while $\rho^{(1)}_{N,k}$ denotes the empirical  density after $k$ update steps of the coupled creation-annihilation (birth-death) process. 
%}
%\jnote{@Peter: Discuss histogram vs average. $\rho_N^{(1)}$ vs $\rho^{(1)}$ at the definition above; i.e. empirical vs average.}
%
%\ppnote{Note that $\rho^{(1)}_{N,k}$ is an empirical density, similar to  $\rho^{(1)}_{N,k}$. Of course it makes no difference that  instead of  annihilating particles at the death process  we created particles with negative ``masses'' when building up $\rho^{(1)}_{N,k}$. The only difference between those two is that
% $\rho^{(1)}_{N}(t)$ denotes the empirical  density at time $t$, while $\rho^{(1)}_{N,k}$ denotes the empirical  density after $k$ update steps of the coupled creation-annihilation (birth-death) process. 
In contrast to the random variables $\rho^{(1)}_N(t)$ and $ \rho^{(1)}_{N,\kappa(t)}$,   $\rho (t)$ is a probability density. Given that both the number of individuals is large and the dependence between the individuals is mild, one expects that $ \rho^{(1)}_{N,\kappa(t)}$ converges in probability to $\rho (t)$ as stated in Theorem~\ref{thm:main}. 
%To prove this theorem, we introduce the auxiliary stochastic mean-field process in the following.  which resembles the main features of the true process and conserves by definition independence of the individuals and is thus accessible by a law of large numbers argument. 

%%%%%%%%%%%%%%%%%%%%%%%%%%%%%%%%%%%%%%%%%%%%%%%%%%%%%%%%%%%%%%%%%%%%
\subsection{Definition of the auxiliary stochastic mean-field process}

We now define the auxiliary stochastic mean-field process. 
Heuristically speaking, the temporal evolution of the production degrees of the population in the auxiliary process mimic the mean-field dynamics defined by Equation~\eqref{eq:mean-field}; see also its discretized form~\eqref{eq:mean-field_discrete}. % evaluated at time $\tau_k$
The central idea of the proof for the convergence of mean-field is to construct the auxiliary process in such a way that (i) the production degrees of the individuals at one update step are created with positive and negative masses, and that (ii) these masses are sampled \textit{independently} of the realization of the previous update step (in contrast to the microscopic process).
At one update step two particles of positive mass and two particles of negative mass are created, which are correlated in general.
Important for our purpose is the independence between consecutive update steps.
The respective random variables are given by $p_{N+2k-1}^{\text{aux},+}$ and $p_{N+2k}^{\text{aux},+}$ for the created production degrees, and $p_{2k-1}^{\text{aux},-}$ and $p_{2k}^{\text{aux},-}$ denote the annihilated production degrees at the $k^\text{th}$ update step of the auxiliary process.
In total, the auxiliary process is defined as follows:
%
%%%%
\begin{defn}\label{def:aux}
The average density of the auxiliary process at the ($k+1)^{\text{th}}$ update step follows from the average density at update step $k$ as:
\begin{align}\nonumber
	\eta^\text{aux}_{k+1}=\eta^\text{aux}_{k}-\frac{1}{N}\eta^\text{aux}_{k}+ \frac{1}{N}\Phi(\eta^\text{aux}_{k})\ ,
\end{align}
and mimics the temporal evolution of the mean-field equation~\eqref{eq:mean-field} at discretized time steps; see Equation~\eqref{eq:mean-field_discrete}. 
The initial probability distribution of production degrees in the population for the auxiliary process is given by $\eta^\text{aux}_{0}=\rho_0$.
Importantly, the update of $\eta^\text{aux}$ is independent of the realization of the auxiliary process at the previous update step, whereas in the microscopic process the time evolution depends upon the realization of the stochastic process.

In one realization of the auxiliary stochastic mean-field process at the update step $k$, individuals with negative and positive masses are created independently of the production degrees of the individuals present at update step $k-1$. The respective random variables describing the values of the production degrees created and annihilated at the $k^{\text{th}}$ update step are given by (two individuals are created and two individuals are annihilated per update step): 
\begin{align}\nonumber
p_{2k-1}^{\text{aux},-}(\omega^k)&\coloneqq X^{\Phi(\eta^\text{aux}_{k})}(\omega^k_1)\ ,\\\nonumber
p_{2k}^{\text{aux},-}(\omega^k)&\coloneqq X^{\eta^\text{aux}_{k}}(\omega^k_2)\ ,\\\nonumber
p_{N+2k-1}^{\text{aux},+}(\omega^k)&\coloneqq p_{2k-1}^{\text{aux},-}\omega^k_3 + R(\langle p\rangle_{\eta^\text{aux}_{k}}) (1-\omega^k_3)\ ,\\\nonumber
p_{N+2k}^{\text{aux},+}(\omega^k)&\coloneqq p_{2k-1}^{\text{aux},-}\omega^k_4 + R(\langle p \rangle_{\eta^\text{aux}_{k}}) (1-\omega^k_4)\ .
\end{align}

The values of $p_{N+2k-1}^{\text{aux},+}$ and $p_{N+2k}^{\text{aux},+}$ depend both on $p_{2k-1}^{\text{aux},-}$ and, thus, indirectly also on each other. Therefore, they are not independent. This dependence, however, is not problematic for our proof because independence holds still true for the vast majority of the individuals' production degrees.
The empirical density of the auxiliary process after $k$ update steps is given by:
\begin{align}\nonumber
\eta^{\text{aux}}_{N,k}(p) \coloneqq\frac{1}{N}\left(\sum_{j=1}^{N+2k}\delta(p-p^{\text{aux},+}_j)-\sum_{j=1}^{2k}\delta(p-p^{\text{aux},-}_j)\right)\ .
\end{align}
Note that, through this definition, $\eta_N^\text{aux}$ may not be positive, but is always normalized.

\end{defn}
%%%%
%

%
\begin{figure}[htb!]
  \centering
  \includegraphics[width=\textwidth]{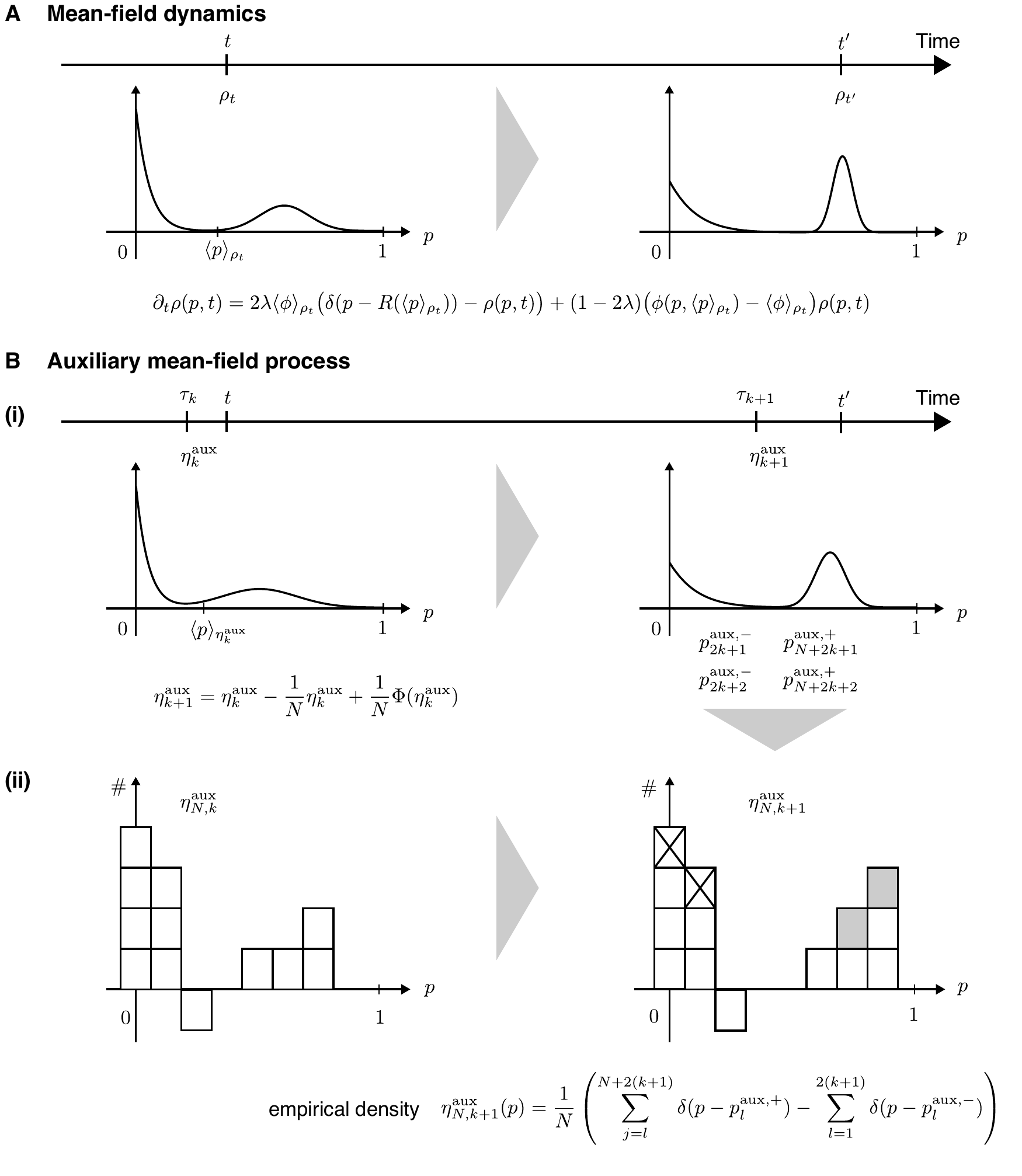}
  \caption{
  (A) Mean-field dynamics. The the mean-field density evolves continuously in time   according to mean-field equation~\eqref{eq:mean-field}.
  (B) Auxiliary mean-field process to prove convergence between microscopic and macroscopic dynamics. 
  (i) The auxiliary mean-field process mimics the temporal evolution of the mean-field equation at discrete time steps; compare with the discretized mean-field equation~\eqref{eq:mean-field_discrete}. In this exemplary realization, the last update steps  $\tau_k$ and  $\tau_{k+1}$ are depicted before time $t$ and $t'$, respectively. 
  For the auxiliary stochastic mean-field process, four masses ($p_{2k+1}^{\text{aux},-}$, $p_{2k+2}^{\text{aux},-}$, $p_{N+2k+1}^{\text{aux},+}$, and $p_{N+2k+2}^{\text{aux},+}$) are sampled at the update step $\tau_k$ by using the average density $\eta_k^\text{aux}$ following Definition~\ref{def:aux}. 
 (ii) With these sampled masses, the empirical density of the auxiliary process is updated from $\eta^{\text{aux}}_{N, k}$ to $\eta^{\text{aux}}_{N, k+1}$. 
  Note that the empirical density of the auxiliary process can be non-positive. 
  Furthermore, the masses are sampled independently of the realization of the previous update step (in contrast to the microscopic process).
}
  \label{fig:macroscopic}
\end{figure}
%

%\clearpage
%%%%%%%%%%%%%%%%%%%%%%%%%%%%%%%%%%%%%%%%%%%%%%%%%%%%%%%%%%%%%%%%%%%%
%%%%%%%%%%%%%%%%%%%%%%%%%%%%%%%%%%%%%%%%%%%%%%%%%%%%%%%%%%%%%%%%%%%%
\section{Proof of the theorem for the convergence to mean-field}
\label{sec:proof}
%%%%%%%%%%%%%%%%%%%%%%%%%%%%%%%%%%%%%%%%%%%%%%%%%%%%%%%%%%%%%%%%%%%%
\subsection{Convergence of the auxiliary process ($\eta^\text{aux}_{N,K} \tos \eta^\text{aux}_{K}$)  -- Law of large numbers argument}

Because individuals are created and annihilated in an independent manner between consecutive update steps in the auxiliary process, one expects that the empirical density $\eta^\text{aux}_{N,k}$ converges to the average density $\eta^\text{aux}_{k}$ of the auxiliary process for every update step $k$ as $N\to \infty$.
Here we show that indeed $\eta^\text{aux}_{N,k} \tos \eta^\text{aux}_{k}$.
More precisely, we have Lemma~\ref{lem:LLN}.

\begin{lem} \label{lem:LLN}
One finds a constant~$0<Const<\infty$ such that for a given update step $K$ the expected difference between a single realization of the auxiliary stochastic mean-field process ($\eta^\text{aux}_{N,K}$) and the average density of the auxiliary mean-field process ($\eta^\text{aux}_{K}$) is estimated as:
\begin{align}\nonumber
\mathbb{E}\left(d(\eta^\text{aux}_{N,K},\eta^\text{aux}_{K})\right)
\leq Const\cdot \frac{K^{3/4}}{N}
+\mathbb{E}\left(d(\eta^\text{aux}_{N,0},\eta^\text{aux}_{0})\right)\ .
\end{align}
\end{lem}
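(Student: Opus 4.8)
The plan is to exploit the one feature that sets the auxiliary process apart from the microscopic one: at the $k$-th update the four masses $p^{\text{aux},\pm}$ are sampled from distributions built solely out of the \emph{deterministic} density $\eta^\text{aux}_{k}$, so the empirical increments at different update steps are genuinely independent. Writing the signed increment as $\eta^\text{aux}_{N,k}-\eta^\text{aux}_{N,k-1}=\frac1N I_k$ with $I_k=\delta(\cdot-p^{\text{aux},+}_{N+2k-1})+\delta(\cdot-p^{\text{aux},+}_{N+2k})-\delta(\cdot-p^{\text{aux},-}_{2k-1})-\delta(\cdot-p^{\text{aux},-}_{2k})$, the first step is a direct computation from Definition~\ref{def:aux} verifying that the construction is \emph{unbiased}, i.e. $\mathbb{E}\langle f,I_k\rangle=\langle f,\Phi(\eta^\text{aux}_{k})-\eta^\text{aux}_{k}\rangle$ for every test function $f$ (the $2\lambda$ response term and the $(1-2\lambda)$ replicator term of $\Phi$ emerge from the $\lambda/(1-\lambda)$ split of $p^{\text{aux},+}$ and the laws of $p^{\text{aux},-}$). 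This matches the deterministic recursion $\eta^\text{aux}_{k+1}=\eta^\text{aux}_{k}+\frac1N(\Phi(\eta^\text{aux}_{k})-\eta^\text{aux}_{k})$ exactly, so no $O(K/N)$ drift can survive, and telescoping gives $\eta^\text{aux}_{N,K}-\eta^\text{aux}_{K}=(\eta^\text{aux}_{N,0}-\eta^\text{aux}_{0})+\frac1N\sum_{k=1}^{K}(I_k-\mathbb{E}I_k)$, a sum of independent, mean-zero, mass-zero signed measures of total variation at most $4$.

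The main obstacle is that $d(\cdot,\cdot)=\|\cdot\|_{BL}$ is a supremum over the infinite-dimensional class of $1$-Lipschitz functions, so one cannot naively interchange $\mathbb{E}$ with the supremum. I would remove this obstacle by rewriting the bounded Lipschitz norm of a mass-zero functional through its antiderivative. For $g\in\Ld$ with $\int_0^1\d p\ g(p)=0$, integration by parts in~\eqref{eq:BL_norm} (using $f(0)=0$, $f(p)=\int_0^p f'$ with $|f'|\le1$, and $G(1)=0$) yields the exact identity $\|g\|_{BL}=\int_0^1\d s\ |G(s)|$ with $G(s)\coloneqq\int_0^s\d p\ g(p)$. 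Applying this to $g=\eta^\text{aux}_{N,K}-\eta^\text{aux}_{K}$ and using Tonelli to exchange the expectation with the $s$-integral reduces the whole estimate to controlling, for each fixed level $s\in[0,1]$, the scalar quantity $G_{N,K}(s)=\langle\mathbf 1_{[0,s]},\eta^\text{aux}_{N,K}-\eta^\text{aux}_{K}\rangle$.

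For fixed $s$ this is a one-dimensional law-of-large-numbers estimate: $G_{N,K}(s)-G_{N,0}(s)=\frac1N\sum_{k=1}^K Z_k^s$ with $Z_k^s\coloneqq\langle\mathbf 1_{[0,s]},I_k-\mathbb{E}I_k\rangle$ independent across $k$, mean zero, and bounded by $|Z_k^s|\le4$, hence of variance $O(1)$ uniformly in $k$ and $s$. A second-moment estimate gives $\mathbb{E}|G_{N,K}(s)-G_{N,0}(s)|\le\frac1N\big(\sum_{k}\mathrm{Var}(Z_k^s)\big)^{1/2}\le Const\cdot\frac{\sqrt K}{N}$ uniformly in $s$; integrating over $s\in[0,1]$ and adding back the initial contribution $\int_0^1\d s\ \mathbb{E}|G_{N,0}(s)|=\mathbb{E}(d(\eta^\text{aux}_{N,0},\eta^\text{aux}_{0}))$ closes the argument. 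Since $\sqrt K\le K^{3/4}$, this route in fact produces a slightly stronger estimate than claimed; the stated exponent $K^{3/4}$ is what one is naturally led to if the supremum over test functions is instead handled by a covering/union-bound over an $\epsilon$-net of $1$-Lipschitz functions, trading the net resolution against the per-function fluctuation. The only genuinely delicate points are the unbiasedness check (ensuring the absence of drift) and the uniform-in-$s$ variance bound, both of which are elementary once the explicit laws of Definition~\ref{def:aux} are inserted.
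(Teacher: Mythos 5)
Your proposal is correct, and it takes a genuinely different route from the paper. The paper proves the lemma by partitioning $[0,1]$ into $n$ subintervals $I_j$, replacing each $1$-Lipschitz test function by its value at the left endpoint of $I_j$ (discretization cost of order $1/n$), and running a separate variance/independence (law of large numbers) estimate on the net particle count of each interval (cost $\sqrt{K}/N$ per interval); balancing $n\cdot\sqrt{K}/N$ against the discretization cost forces the choice $n=K^{1/4}$ and produces the exponent $K^{3/4}$. You instead linearize the supremum over test functions exactly, via the identity, valid for any mass-zero $g\in\Ld$ by integration by parts (using $f(0)=0$, $|f^\prime|\leq 1$ and $G(1)=0$),
\begin{align}\nonumber
\|g\|_{BL}=\int_0^1 \mathrm{d}s\ \left|\int_0^s \mathrm{d}p\ g(p)\right|\ ,
\end{align}
which is the same cumulative-distribution-function device the paper itself uses in Lemma~\ref{lem:Xbyd} (there in the direction ``$\geq$'' via the sign test function $h$; you need the elementary integration-by-parts direction ``$\leq$''). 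After Tonelli, the estimate reduces to a scalar law of large numbers at each level $s$, uniform in $s$, and yields $Const\cdot\sqrt{K}/N$ --- strictly sharper than the stated $K^{3/4}/N$, with no net/partition trade-off at all; your closing remark correctly diagnoses that the paper's weaker exponent is an artifact of its covering argument. Both proofs rest on the same two pillars, which you isolate explicitly: (i) independence of the increments $I_k$ across update steps (the defining feature of the auxiliary process), and (ii) unbiasedness, $\mathbb{E}(I_k)=\Phi(\eta^\text{aux}_{k})-\eta^\text{aux}_{k}$, so that telescoping leaves only the initial error plus a mean-zero sum. The paper invokes (ii) silently, when it equates $\mathbb{E}\big(\sum_k(Y^{j,+}_{k,1}+Y^{j,+}_{k,2}-Y^{j,-}_{k,1}-Y^{j,-}_{k,2})\big)$ with $N\int_{I_j}\mathrm{d}p\,(\eta^\text{aux}_{K}-\eta^\text{aux}_{0})$; promoting it to an explicit verification step, as you do, is an improvement, and note that this check only closes if the first annihilated mass $p^{\text{aux},-}_{2k-1}$ is sampled from the fitness-weighted (replicator) density alone rather than from the full $\Phi$ containing the $2\lambda\,\delta_R$ atom --- with the literal reading of Definition~\ref{def:aux} the expected increment would be $2\lambda\delta_R+(1-2\lambda)\Phi(\eta^\text{aux}_k)-\eta^\text{aux}_k\neq\Phi(\eta^\text{aux}_k)-\eta^\text{aux}_k$. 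This is a shared (and repairable) imprecision of the paper's definitions rather than a gap in your argument, but it is precisely the ``delicate point'' you flagged, and it deserves the explicit computation in a final write-up.
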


\begin{proof}
Note, that we are dealing with the auxiliary process and, thus, have independence of the random variables for the created and annihilated production degrees between consecutive update steps.
The proof of Lemma~\ref{lem:LLN} is based on a law of large numbers argument. Such an argument is standard, of course. However since the proof of the lemma is short and we deal with a special notion of the bounded Lipschitz distance $d(\cdot,\cdot)$, we provide it in the following. 

We first split the interval $[0,1]$ into $n$ pieces $I_j\coloneqq[\frac{j-1}{n},\frac{j}{n}]$ with $1\leq j\leq n$. 
Later on, $n$ is chosen as a function of the total number of update steps $K$.
The definition of $d(\cdot,\cdot)$ involves taking a supremum. Taking the supremum does not commute with taking the expectation value.
Therefore, we first estimate the distance $d(\eta^\text{aux}_{N,K},\eta^\text{aux}_{K})$ of the empirical density $\eta^\text{aux}_{N,K}$ from the average density $\eta^\text{aux}_{K}$ and take the expectation value later.

Using Lipschitz continuity of $f$ on every interval $I_j$ (that is $f(p) \leq |f(\frac{j-1}{n})|+ |\frac{j-1}{n}-p|$ because $\|f\|_L=1$ on every interval $I_j$), one obtains
\begin{align}\nonumber
d(\eta^\text{aux}_{N,K}&,\eta^\text{aux}_{K})
=\|\eta^\text{aux}_{N,K}-\eta^\text{aux}_{K}\|_{BL} \\\nonumber
&= \sup_{\|f\|_L=1} \left| \int_0^1\d p\ f(p)\eta^\text{aux}_{N,K}(p)-\int_0^1\d p\ f(p)\eta^\text{aux}_{K}(p) \right|\ ,\\\nonumber
&\leq\sum_{j=1}^n\sup_{\|f\|_L=1}\left|\int_{I_j} \d p\ f(p)\left(
\eta^\text{aux}_{N,K}(p)-\eta^\text{aux}_{N,0}(p)-\eta^\text{aux}_{K}(p)+\eta^\text{aux}_{0}(p)\right)  \right|+d(\eta^\text{aux}_{N,0},\eta^\text{aux}_{0})\ ,\\\nonumber
&\leq \sum_{j=1}^n\int_{I_j}\d p\ \left|f\left(\frac{j-1}{n}\right)\right|
\left|\eta^\text{aux}_{N,K}(p)-\eta^\text{aux}_{N,0}(p)-\eta^\text{aux}_{K}(p)+\eta^\text{aux}_{0}(p)\right|\\\nonumber
&\quad+\sum_{j=1}^n \int_{I_j}\d p\ \left|\frac{j-1}{n}-p\right|
\left|\eta^\text{aux}_{N,K}(p)-\eta^\text{aux}_{N,0}(p)-\eta^\text{aux}_{K}(p)+\eta^\text{aux}_{0}(p)\right|  +d(\eta^\text{aux}_{N,0},\eta^\text{aux}_{0})\ ,\\\nonumber
&\leq Const\cdot \left(1+\frac{1}{n}\right)\sum_{j=1}^n\int_{I_j}\d p\
\left|\eta^\text{aux}_{N,K}(p)-\eta^\text{aux}_{N,0}(p)-\eta^\text{aux}_{K}(p)+\eta^\text{aux}_{0}(p)\right|+d(\eta^\text{aux}_{N,0},\eta^\text{aux}_{0})\ .
\end{align}
It follows that
\begin{align}\nonumber
\mathbb{E}&(d(\eta^\text{aux}_{N,K},\eta^\text{aux}_{K}))\\\label{eq:estimate_d_aux_N}
&\leq Const\cdot\left(1+\frac{1}{n}\right)\sum_{j=1}^n \mathbb{E}\left(\int_{I_j}\d p\ \left|
(\eta^\text{aux}_{N,K}(p)-\eta^\text{aux}_{N,0}(p)-\eta^\text{aux}_{K}(p)+\eta^\text{aux}_{0}(p))\right|\right)+\mathbb{E}\left(d(\eta^\text{aux}_{N,0},\eta^\text{aux}_{0})\right)\ .
\end{align}

For each interval $I_j$ we now give a law of large numbers argument. 
We define the random variable $Y^{j,+}_{k,1}$ that takes value 1 if the individual $N+2k-1$ with a positive mass is sampled inside the interval $I_j$ in the $k^{\text{th}}$ update step, and  that takes value 0 otherwise.
%(in other words, the $(N+k)^{\text{th}}$ individual with a positive charge)
Accordingly, the random variable $Y^{j,+}_{k,2}$  takes value 1 if individual $N+2k$  is sampled inside the interval $I_j$.
Furthermore, the random variables $Y^{j,-}_{k,1/2}$ indicate whether an individual is created with a negative mass in the interval $I_j$ at the $k^{\text{th}}$ update step:
\begin{align}\nonumber
Y^{j,-}_{k,1}(\omega^k)&\coloneqq
\begin{cases} 1 &\mbox{if } p_{2k-1}^{\text{aux},-}(\omega^k)\in I_j\ , \quad \text{that is, if }  X^{\Phi(\eta^\text{aux}_{k})}(\omega^k_1) \in I_j\ , \\
0 & \mbox{else}\ .  
\end{cases}\\\nonumber
Y^{j,-}_{k,2}(\omega^k)&\coloneqq
\begin{cases} 1 &\mbox{if } p_{2k}^{\text{aux},-}(\omega^k)\in I_j\ , \quad \text{that is, if }  X^{\eta^\text{aux}_{k}}(\omega^k_2)\in I_j  \ , \\
0 & \mbox{else}\ .  
\end{cases}\\\nonumber
Y^{j,+}_{k,1}(\omega^k)&\coloneqq
\begin{cases} 1 &\mbox{if } p_{N+2k-1}^{\text{aux},+}(\omega^k) \in I_j\ , \\
0 & \mbox{else}\ .  
\end{cases}\\\nonumber
Y^{j,+}_{k,2}(\omega^k)&\coloneqq
\begin{cases} 1 &\mbox{if } p_{N+2k}^{\text{aux},+}(\omega^k) \in I_j\ , \\
0 & \mbox{else}\ .  
\end{cases}
\end{align}
By Definition~\ref{def:aux} of the auxiliary process, the $Y^{j,\pm}_k$ are independent for different values of $k$, that is consecutive updates with birth and death are independent.
Therefore, the difference between positive and negative masses in the interval $I_j$ after $K$ update steps in one realization of the auxiliary process is obtained as:
\begin{align}\nonumber
	&\sum_{k=1}^K \left(Y^{j,+}_{k,1}(\omega^k)+Y^{j,+}_{k,2}(\omega^k )-Y^{j,-}_{k,1}(\omega^k)-Y^{j,-}_{k,2}(\omega^k)\right) = N\int_{I_j}\d p\ \left(\eta^\text{aux}_{N,K}(p) - \eta^\text{aux}_{N,0}(p)\right)\ .
\end{align}
%
%\ppnote{ueberpruefen ob der Faktor 2 oben auf der rechten Seite richtig ist!} \jnote{@Peter: I think that there should \textit{not} be a factor of 2. The left hand side of the equation denotes the  \textit{effective} change of production degrees in the interval $I_j$ that has occurred during $K$ update steps. For example, during one update step in one interval $I_j$, it may change by $\pm 2, \pm, 1, 0$. The right hand side of the equation denotes the same change in terms of $\delta$-functions. For example, during one update step in one interval $I_j$, the difference between the histograms can be $\pm 2/N\cdot \delta(p-p_{I_j}), \pm 1/N\cdot \delta(p-p_{I_j}), 0$. there is no factor of 2 involved.}
By the definition of the average density $\eta^\text{aux}$ of the auxiliary process, it is:
\begin{align}\nonumber
	&\mathbb{E}\left(\sum_{k=1}^K \left(Y^{j,+}_{k,1}(\omega^k)+Y^{j,+}_{k,1}(\omega^k)-Y^{j,-}_{k,1}(\omega^k)-Y^{j,-}_{k,1}(\omega^k)\right)\right)= N\int_{I_j}\d p\ \left(\eta^\text{aux}_{K}(p) - \eta^\text{aux}_{0}(p)\right)\ .
\end{align}
Introducing  
$\left(Z_k\right)_{k\in\{1,\ldots,K\}}\in\left\{\left(Y^{j,+}_{k,1}\right)_{k\in\{1,\ldots,K\}},\left(Y^{j,+}_{k,2}\right)_{k\in\{1,\ldots,K\}},\left(Y^{j,-}_{k,1}\right)_{k\in\{1,\ldots,K\}},\left(Y^{j,-}_{k,2}\right)_{k\in\{1,\ldots,K\}}\right\}$, and using independence between the different update steps, we have a law of large numbers argument for every interval $I_j$ 
as follows:
\begin{align}\nonumber
\mathbb{E}\left(\left|\frac{1}{N}\sum_{k=1}^K Z_k
-\mathbb{E}\left(\frac{1}{N}\sum_{k=1}^KZ_k \right)\right|\right) 
\leq \left(\text{Var}\left( \frac{1}{N}\sum_{k=1}^KZ_k\right)\right)^{1/2}\leq \frac{1}{N} \sqrt{K} \frac{1}{2} \ .
\end{align}
The last estimate exploits the independence of random variables between consecutive steps of the sampling process, and the boundedness of the variance with $\text{Var}(Y^{j,\pm}_{k, 1/2})\leq 1/4$ for all $k = 1, \dots, K$.

Using triangle inequality and linearity of the expectation value we obtain:
\begin{align}\nonumber
\mathbb{E}&\left(\left|\frac{1}{N}\sum_{k=1}^K \left(Y^{j,+}_{k,1}+Y^{j,+}_{k,2}-Y^{j,-}_{k,1}-Y^{j,-}_{k,2}\right)
-\mathbb{E}\left(\frac{1}{N}\sum_{k=1}^K\left(Y^{j,+}_{k,1}+Y^{j,+}_{k,2}-Y^{j,-}_{k,1}-Y^{j,-}_{k,2}\right)\right)\right|\right) 
\leq \frac{2\sqrt{K}}{N}  \ .
\end{align}

Therefore, we obtain with Equation~\eqref{eq:estimate_d_aux_N}:
\begin{align}\nonumber%\label{eq:split}
\mathbb{E}\left(d(\eta^\text{aux}_{N,K},\eta^\text{aux}_{K})\right)
&\leq Const\cdot \left(1+\frac{1}{n}\right)\sum_{j=1}^n \frac{2\sqrt{K}}{N}
+\mathbb{E}\left(d(\eta^\text{aux}_{N,0},\eta^\text{aux}_{0})\right)\ .
\end{align}
Choosing $n = K^{1/4}$ yields the estimate:
\begin{align}\nonumber
\mathbb{E}\left(d(\eta^\text{aux}_{N,K},\eta^\text{aux}_{K})\right)
\leq Const\cdot  \frac{K^{3/4}}{N}
+\mathbb{E}\left(d(\eta^\text{aux}_{N,0},\eta^\text{aux}_{0})\right)\ ,
\end{align}
which proves the lemma. In summary, the estimate for the expected distance between empirical and average density of the auxiliary process scales with the number of time steps $K$ as $K^{3/4}$: the factor $K^{1/4}$ stems from the chosen number of intervals and $\sqrt{K}$ stems from the law of large numbers on each of these intervals.
\end{proof}

%\clearpage
%%%%%%%%%%%%%%%%%%%%%%%%%%%%%%%%%%%%%%%%%%%%%%%%%%%%%%%%%%%%%%%%%%%%
\subsection{Convergence of the microscopic to the auxiliary process ($\rho^{(1)}_{N,K}\tos\eta^\text{aux}_{N,K}$) -- Control of error propagation with Gr\"onwall's inequality}

We now show that the propagation of errors, which build up over time due to the correlation of production degrees, can be controlled with Gr\"onwall's inequality.
In other words, the empirical density of the microscopic process $\rho^{(1)}_{N,K}$ converges to the empirical density of the auxiliary process  $\eta^\text{aux}_{N,K}$, as $N\to \infty$ for any finite update step $K$, see Lemma~\ref{lem:micro}. 

\begin{lem}\label{lem:aux1} 
Let $\nu\in(L^\infty)^\star$ and $\psi\in(L^\infty)^\star$ be two one-particle probability densities, $f$ some globally Lipschitz continuous function on $[0,1]$. Then
\begin{align}\nonumber
	\left|\langle f\rangle_{\nu}-\langle f\rangle_{\psi}\right|\leq \| f \|_L d(\nu,\psi)\ .
\end{align}
Here  $\langle\cdot\rangle_{\nu}$ and  $\langle\cdot\rangle_{\psi}$ means averaging with respect to $\nu$ and $\psi$, respectively, and $\|f\|_L$ is the global Lipschitz constant of~$f$.  
\end{lem}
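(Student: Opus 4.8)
The plan is to reduce the estimate directly to the definition~\eqref{eq:BL_norm} of the bounded Lipschitz norm, the only non-cosmetic input being that $\nu$ and $\psi$ are both normalized probability densities, so that their difference $g\coloneqq\nu-\psi$ satisfies $\int_0^1\d p\ g(p)=0$. This vanishing of the total mass is exactly what will allow me to subtract the constant $f(0)$ from the test function and then rescale it to unit Lipschitz norm, producing an admissible competitor for the supremum defining $\|g\|_{BL}=d(\nu,\psi)$, without altering the integral I need to bound.

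First I would treat the degenerate case $\|f\|_L=0$ separately: then $f$ is constant and both sides of the claimed inequality vanish, the left-hand side because $\langle f\rangle_{\nu}=\langle f\rangle_{\psi}$ follows from $\int_0^1\d p\ g(p)=0$. Assuming $\|f\|_L>0$, I set
\[
	\tilde f\coloneqq\frac{f-f(0)}{\|f\|_L}\ ,
\]
which by construction satisfies $\|\tilde f\|_L=1$ and $\tilde f(0)=0$, hence is admissible in the supremum~\eqref{eq:BL_norm}. Using $\int_0^1\d p\ g(p)=0$ to discard the boundary term, one computes
\[
	\int_0^1\d p\ \tilde f(p)\,g(p)
	=\frac{1}{\|f\|_L}\left(\int_0^1\d p\ f(p)g(p)-f(0)\int_0^1\d p\ g(p)\right)
	=\frac{1}{\|f\|_L}\int_0^1\d p\ f(p)g(p)\ .
\]
Multiplying through by $\|f\|_L$ and bounding the left-hand side by the supremum in~\eqref{eq:BL_norm} then gives
\[
	\left|\langle f\rangle_{\nu}-\langle f\rangle_{\psi}\right|
	=\left|\int_0^1\d p\ f(p)g(p)\right|
	=\|f\|_L\left|\int_0^1\d p\ \tilde f(p)g(p)\right|
	\leq\|f\|_L\,\|g\|_{BL}
	=\|f\|_L\,d(\nu,\psi)\ ,
\]
which is the assertion.

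I do not expect a genuine obstacle here; the lemma is essentially a restatement of the duality between the bounded Lipschitz norm and Lipschitz test functions. The one point that deserves care is the use of the normalization $\int_0^1\d p\ g(p)=0$ to eliminate the term $f(0)\int_0^1\d p\ g(p)$: it is precisely this cancellation that renders the boundary condition $f(0)=0$ in~\eqref{eq:BL_norm} harmless for differences of probability densities, consistent with the observation recorded just after~\eqref{eq:BL_norm}.
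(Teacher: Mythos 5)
Your proof is correct and takes essentially the same route as the paper: both arguments rescale $f$ to unit Lipschitz norm and use it as a competitor in the supremum defining $d(\nu,\psi)$. In fact your version is slightly more careful than the paper's, which tests with $f/\|f\|_L$ without checking the constraint $f(0)=0$ in~\eqref{eq:BL_norm}; your subtraction of $f(0)$, justified by $\int_0^1\d p\,(\nu(p)-\psi(p))=0$, and your separate treatment of the degenerate case $\|f\|_L=0$ close these small gaps.
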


\begin{proof}
Plugging in the definitions, one obtains:
\begin{align}\nonumber
\left|\langle f\rangle_{\nu}-\langle f\rangle_{\psi}\right|
&=\left|\int_0^1\d p\ f(p)\left(\nu(p)-\psi(p)\right) \right|\ =\|f\|_L\left|\int_0^1\d p\ \frac{f(p)}{\|f\|_L}\left(\nu(p)-\psi(p)\right) \right|\ .
\end{align}
Since $\left\| \frac{f(p)}{\|f\|_L}\right\|_L=1$ we can use it to test the supremum in the definition of the bounded Lipschitz distance $d(\cdot,\cdot)$ and obtain that the right hand side of the last equation is indeed bounded by $\|f\|_L d(\nu,\psi)$. 
\end{proof}

\begin{lem}\label{lem:Xbyd}
Let $\nu\in(L^\infty)^\star$ and $\psi\in(L^\infty)^\star$ be two one particle probability densities. 
Then 
\begin{align}\nonumber
\mathbb{E}(|X^{\nu}-X^{\psi}|)\leq d(\nu,\psi)\ ,
\end{align}
see Definition~\ref{def:rv} of the quantile function $X^{\nu}$ and $X^{\psi}$.
\end{lem}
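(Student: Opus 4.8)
The plan is to recognize this statement as the classical identification of the bounded Lipschitz (Kantorovich--Rubinstein) distance with the $L^1$-distance of quantile functions. The map $\theta\mapsto\big(X^\nu(\theta),X^\psi(\theta)\big)$, with $\theta$ uniform on $[0,1]$, is exactly the monotone coupling of $\nu$ and $\psi$, so $\mathbb{E}(|X^\nu-X^\psi|)$ is the transport cost of that coupling; in fact the inequality is an equality, both sides equalling the $1$-Wasserstein distance, but only $\leq$ is needed here. I would \emph{not} argue via optimal-transport duality in the abstract (invoking ``an arbitrary coupling bounds $W_1$ from above'' would produce the inequality in the wrong direction). Instead I would prove the bound directly by passing to the cumulative distribution functions $F_\nu(x)=\int_0^x\d p\ \nu(p)$ and $F_\psi(x)=\int_0^x\d p\ \psi(p)$ and exhibiting a single test function that already realizes the claimed estimate.

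First I would establish the area identity
\begin{align}\nonumber
\mathbb{E}(|X^\nu-X^\psi|)=\int_0^1\d p\ \left|F_\nu(p)-F_\psi(p)\right|\ .
\end{align}
This follows from the elementary layer-cake formula $|a-b|=\int_0^1\d x\ \left|\mathbf{1}_{\{x<a\}}-\mathbf{1}_{\{x<b\}}\right|$ for $a,b\in[0,1]$, applied with $a=X^\nu(\theta)$ and $b=X^\psi(\theta)$, followed by Tonelli's theorem to exchange the integral over $\theta$ with the integral over $x$. The exchange uses the defining quantile relation (Definition~\ref{def:rv}) $x<X^\nu(\theta)\iff\theta>F_\nu(x)$, valid for almost every pair $(x,\theta)$, which turns the inner integral into $\int_0^1\d\theta\ \left|\mathbf{1}_{\{\theta>F_\nu(x)\}}-\mathbf{1}_{\{\theta>F_\psi(x)\}}\right|=\left|F_\nu(x)-F_\psi(x)\right|$.

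Next I would bound the right-hand side by $d(\nu,\psi)$ through integration by parts. Writing $G\coloneqq F_\nu-F_\psi$, one has $G'=\nu-\psi$ in the distributional sense and $G(0)=G(1)=0$, the latter because both $\nu$ and $\psi$ are normalized. Choosing the test function $f(x)\coloneqq-\int_0^x\d p\ \mathrm{sgn}\big(G(p)\big)$, which is continuous, satisfies $f(0)=0$, and is $1$-Lipschitz since its derivative is bounded by $1$ in modulus, Lebesgue--Stieltjes integration by parts gives
\begin{align}\nonumber
\int_0^1\d p\ f(p)\big(\nu(p)-\psi(p)\big)=-\int_0^1\d p\ f'(p)\,G(p)=\int_0^1\d p\ \left|G(p)\right|\ ,
\end{align}
with the boundary terms vanishing by $G(0)=G(1)=0$. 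Since this $f$ is admissible in the supremum defining $\|\cdot\|_{BL}$ in~\eqref{eq:BL_metric} (if $\|f\|_L<1$ one rescales by $1/\|f\|_L$, which only increases the left-hand side, and the degenerate case $G\equiv0$ is trivial), I conclude $d(\nu,\psi)\geq\int_0^1\d p\ |G(p)|=\mathbb{E}(|X^\nu-X^\psi|)$, as claimed.

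The main obstacle is measure-theoretic bookkeeping rather than any conceptual difficulty: one must justify the quantile/CDF equivalence up to null sets (the values of $\theta$ at which $F$ is locally constant or jumps form a set of measure zero under $\theta\sim\mathrm{U}[0,1]$), verify that the interchange of integrals and the distributional integration by parts are legitimate when $\nu,\psi$ carry atoms, and handle the normalization subtlety that the admissible test functions are required to have $\|f\|_L=1$ exactly. The last point is dealt with by the rescaling remark above; the remaining points are routine because, although $(L^\infty)^\star$ contains merely finitely additive measures in general, the densities arising throughout the proof (finite sums of Dirac masses together with $L^1$ parts) possess genuine right-continuous cumulative distribution functions, to which Tonelli's theorem and Lebesgue--Stieltjes integration by parts apply without further care at the endpoints.
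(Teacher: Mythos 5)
Your proposal is correct and follows essentially the same route as the paper: both reduce $\mathbb{E}(|X^{\nu}-X^{\psi}|)$ to the $L^1$-distance of the cumulative distribution functions and then bound that quantity by $d(\nu,\psi)$ using the $1$-Lipschitz test function built as an antiderivative of $\mathrm{sgn}\left(F_\nu-F_\psi\right)$, with integration by parts and the boundary terms vanishing by normalization. The only difference is in how the intermediate area identity is justified --- the paper invokes Laisant's formula for inverse functions, whereas you use a layer-cake/Tonelli argument --- and your version is, if anything, more careful, since it remains valid when the densities carry atoms and the quantile functions are not genuine inverses.
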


\begin{proof}
Laisant's formula for inverse functions states that for any invertible function $g$, it is
\begin{align}\nonumber	
\int_{a}^{b}\d y\ g^{-1}(y)+\int_c^d\d x\ g(x)=bd-ac\ .
\end{align}
Applying Laisant's formula to the random variables $X^{\nu}$ and $X^{\psi}$, it follows that
\begin{align}\label{eq:Laisant}	
\mathbb{E}\left(\left|X^{\nu}-X^{\psi}\right|\right) = \int_0^1\d \theta\  \left|X^{\nu}(\theta)-X^{\psi}(\theta)\right| = \int_0^1\d p\ \left|\int_0^p\d y\ (\nu(y)- \psi(y))\right|\ ,
\end{align}
see Figure~\ref{fig:Laisant}(A) for a sketch.

\begin{figure*}[th!]
  \centering
  \includegraphics[width=\textwidth]{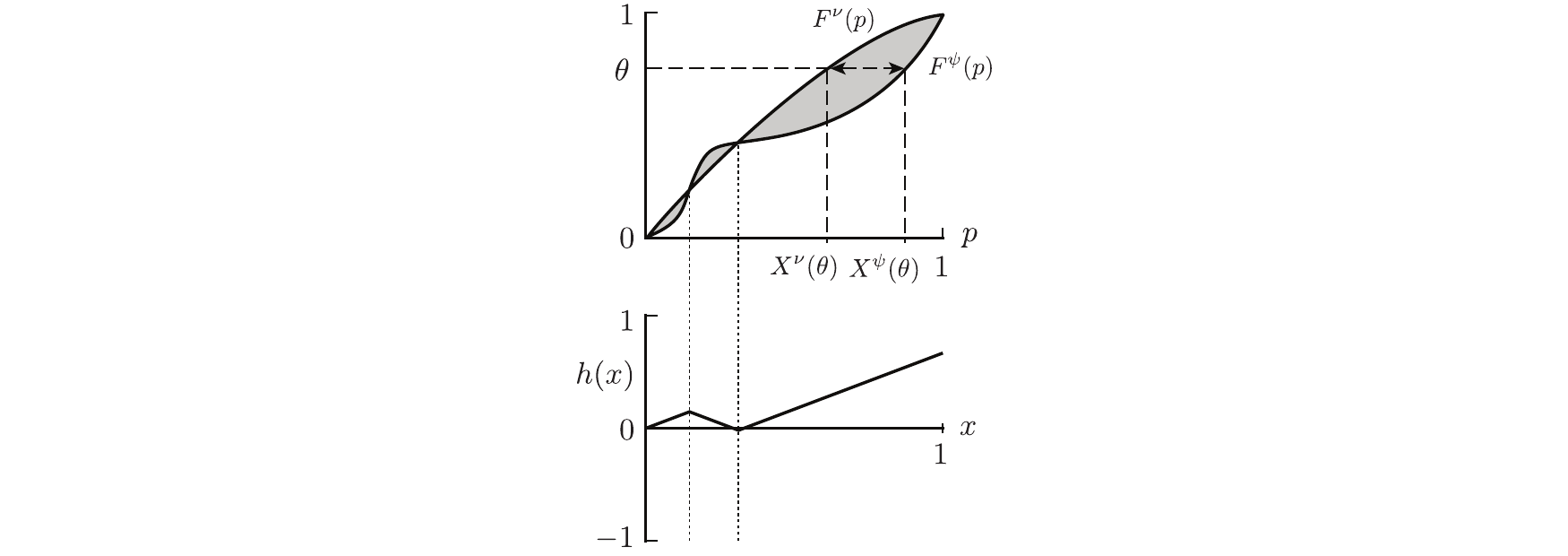}
  \caption{
(A) Illustration of the application of Laisant's formula to the expected difference of two random variables, see Equation~\eqref{eq:Laisant}. 
The expected difference of the two random variables $X^{\nu}$ and $X^{\psi}$ is obtained as the area enclosed by the two curves $X^{\nu}(\theta)$ and $X^{\psi}(\theta)$ for $\theta\in[0,1]$. Thus, the area is given by $\int_0^1\d \theta\  \left|X^{\nu}(\theta)-X^{\psi}(\theta)\right|$.  
On the other hand, the area between the two curves can be computed from the inverse functions to $X^{\nu}$ and $X^{\psi}$, which are the cumulative distribution functions $F^\nu(p) =  \int_{0}^p\d y\ \nu(y)$ and $F^\psi(p) =  \int_{0}^p\d y\ \psi(y)$, respectively, see Definition~\ref{def:rv}.
Therefore, the area is also given by $\int_0^1\d p\ \left|F^\nu(p)-F^\psi(p)\right|$. The rigorous argument follows with Laisant's formula.
(B) Definition of the function $h$ that is used to estimate $d(\nu,\psi)$. $h^\prime(x)=1$ if $F^\nu(p) > F^\psi(p)$ and $h^\prime(x)=  -1$ if $F^\nu(p) < F^\psi(p)$, and thus $\|h\|_L\leq 1$.
}
  \label{fig:Laisant}
\end{figure*}

Now, let $h:[0,1]\to\mathbb{R}$ be given by $h(x)\coloneqq\int_{0}^{x}\d p\ \text{sgn}\left(\int_0^p\d y\ \nu(y)-\psi(y)\right)$ (here sgn is the signum function), that is $h^\prime(x)=1$ if $\int_0^p \d y\ \nu(y) > \int_0^p\d y\ \psi(y)$ and $h^\prime(x)=  -1$ if $\int_0^p\d y\ \nu(y) < \int_0^p\d y\ \psi(y)$, see Figure~\ref{fig:Laisant}(B) for a sketch. 
In particular, it is $\|h\|_L\leq 1$. Therefore, one may use $h$ to estimate the supremum in the definition of the bounded Lipschitz metric $d(\cdot,\cdot)$ as:
\begin{align}\nonumber
d(\nu,\psi)\geq \int_0^1\d p\ h(p)(\nu(p)-\psi(p))\ .
\end{align}
Integration by parts yields:
\begin{align}\nonumber
d(\nu,\psi)&\geq\int_0^1\d p\  h^\prime(p)\int_0^p\d y\ (\nu(y)- \psi(y))\ ,\\\nonumber
&= \int_0^1\d p\ \left|\int_0^p\d y\ (\nu(y)- \psi(y))\right| \ .
\end{align}
Since both $\nu$ and $\psi$ are normalized to 1, the boundary terms vanish in the integration by parts above.

Together with Equation~\eqref{eq:Laisant} from above, one obtains the estimate of the lemma:
\begin{align}\nonumber
\mathbb{E}(|X^{\nu}-X^{\psi}|)\leq d(\nu,\psi) \ .
\end{align}
\end{proof}

\begin{lem}\label{lem:micro}
There exists a constant $0<Const<\infty$ such that for a given update step K one estimates:
\begin{align}
\left|\mathbb{E}\left(d(\rho_{N,K}^{(1)},\eta_{N,K}^\text{aux})\right)\right|\ \leq  e^{Const\cdot\frac{K}{N}}\left(\mathbb{E}\left( d(\rho_{N,0}^{(1)},\eta_{N,0}^\text{aux})\right) + \frac{K^{3/4}}{N}\right)\ .
\end{align}
\end{lem}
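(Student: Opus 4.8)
The plan is to run a discrete Gr\"onwall argument on the coupled quantity $a_K \coloneqq \mathbb{E}\big(d(\rho^{(1)}_{N,K},\eta^\text{aux}_{N,K})\big)$, exploiting the fact that the microscopic process (Definition~\ref{defn:micro_process}) and the auxiliary process (Definition~\ref{def:aux}) are driven by the \emph{same} random variables $\omega^k=(\omega^k_1,\omega^k_2,\omega^k_3,\omega^k_4)$. Under this coupling the two processes create and annihilate production degrees at \emph{nearby} locations at every update step; the only structural difference is that the microscopic sampling densities are built from the random empirical density $\rho^{(1)}_{N,K}$, whereas the auxiliary sampling densities are built from the deterministic average density $\eta^\text{aux}_{K+1}$. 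The target is a one-step recursion of the form $a_{K+1}\le(1+\tfrac{C}{N})a_K+\tfrac{C}{N}\cdot\tfrac{K^{3/4}}{N}$, whose iteration produces the claimed exponential bound.

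First I would write down the one-step increment. Both empirical densities change by $\tfrac1N$ times two positive and two negative delta functions placed at the freshly sampled points, so for a fixed test function $f$ with $\|f\|_L=1$ and $f(0)=0$,
\begin{align}\nonumber
\int_0^1\d p\ f\,(\rho^{(1)}_{N,K+1}-\eta^\text{aux}_{N,K+1})
= \int_0^1\d p\ f\,(\rho^{(1)}_{N,K}-\eta^\text{aux}_{N,K})
+\frac1N\sum_{\text{4 points}}\pm\big(f(p^{\text{micro}})-f(p^{\text{aux}})\big)\ ,
\end{align}
and $\|f\|_L=1$ bounds each summand by $\tfrac1N|p^{\text{micro}}-p^{\text{aux}}|$. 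Since the last sum does not depend on $f$, taking the supremum over $f$ and then the expectation yields $a_{K+1}\le a_K+\tfrac1N\sum_{\text{4 points}}\mathbb{E}\big|p^{\text{micro}}-p^{\text{aux}}\big|$. I would then bound the four coupled differences by conditioning on the history $\omega^1,\dots,\omega^K$, which freezes $\rho^{(1)}_{N,K}$ and $\eta^\text{aux}_{K+1}$ so that only the fresh randomness $\omega^{K+1}$ acts. For the two annihilated degrees, which are quantile samples, Lemma~\ref{lem:Xbyd} turns the conditional expectations into $d\big(\Phi(\rho^{(1)}_{N,K}),\Phi(\eta^\text{aux}_{K+1})\big)$ and $d(\rho^{(1)}_{N,K},\eta^\text{aux}_{K+1})$; the renormalization after removing the first annihilated individual in the microscopic second sample perturbs its base density only by $\mathcal{O}(1/N)$ in $d(\cdot,\cdot)$, a harmless lower-order term. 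For the two created degrees, the decomposition $p^{+}=p^{-}\omega_{3/4}+R(\langle p\rangle)(1-\omega_{3/4})$ with $\omega_{3/4}\in\{0,1\}$ reduces $|p^{\text{micro},+}-p^{\text{aux},+}|$ to the already-controlled annihilation difference plus $\big|R(\langle p\rangle_{\rho^{(1)}_{N,K}})-R(\langle p\rangle_{\eta^\text{aux}_{K+1}})\big|$, which I bound using the Lipschitz constant of $R$ together with Lemma~\ref{lem:aux1} applied to $f(p)=p$, giving $|\langle p\rangle_\nu-\langle p\rangle_\psi|\le d(\nu,\psi)$.

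The hard part will be establishing Lipschitz continuity of the reproduction map, $d(\Phi(\nu),\Phi(\psi))\le C\,d(\nu,\psi)$, since $\Phi$ combines a singular response term $2\lambda\,\delta(p-R(\langle p\rangle_\nu))$ whose support \emph{moves} with the mean of its argument, and a replicator term that is a \emph{product} of a density-dependent fitness ratio and the density itself. I would bound the $\delta$-contribution by $2\lambda\|R\|_L\,|\langle p\rangle_\nu-\langle p\rangle_\psi|\le 2\lambda\|R\|_L\,d(\nu,\psi)$ (the bounded Lipschitz distance of two point masses is at most the distance between their locations), and the replicator contribution by splitting the difference into a term proportional to $(\nu-\psi)$ tested against the bounded, Lipschitz fitness weight, and a term proportional to $(\langle p\rangle_\nu-\langle p\rangle_\psi)$; both reduce to $d(\nu,\psi)$ via Lemma~\ref{lem:aux1} and the boundedness of the fitness guaranteed by $0<s<1/c$. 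Granting this, every one of the four terms is controlled by $C\,\mathbb{E}\big(d(\rho^{(1)}_{N,K},\eta^\text{aux}_{K+1})\big)+\mathcal{O}(1/N)$.

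To close the recursion I would relate the average auxiliary sampling density to the empirical one by a triangle inequality and Lemma~\ref{lem:LLN}:
\begin{align}\nonumber
\mathbb{E}\big(d(\rho^{(1)}_{N,K},\eta^\text{aux}_{K+1})\big)
\le a_K+\mathbb{E}\big(d(\eta^\text{aux}_{N,K},\eta^\text{aux}_{K})\big)+d(\eta^\text{aux}_{K},\eta^\text{aux}_{K+1})
\le a_K+C\,\frac{K^{3/4}}{N}+\frac{C}{N}\ ,
\end{align}
where the last summand is the $\mathcal{O}(1/N)$ single-step drift of the deterministic recursion $\eta^\text{aux}_{K+1}=\eta^\text{aux}_{K}+\tfrac1N(\Phi(\eta^\text{aux}_{K})-\eta^\text{aux}_{K})$, and the initial auxiliary empirical density is taken to coincide with the microscopic one so the initial term of Lemma~\ref{lem:LLN} aligns with $a_0$. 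Substituting gives $a_{K+1}\le(1+\tfrac{C}{N})a_K+\tfrac{C}{N}\cdot\tfrac{K^{3/4}}{N}$. Iterating this discrete Gr\"onwall inequality from $K=0$, using $(1+C/N)^K\le e^{CK/N}$ and the regime $K=\mathcal{O}(N)$ (so that $\tfrac{1}{N^2}\sum_{j<K}j^{3/4}\le C\,\tfrac{K^{3/4}}{N}$) to absorb the geometric sum into the prefactor, yields $a_K\le e^{Const\cdot K/N}\big(a_0+K^{3/4}/N\big)$, which is exactly the assertion of Lemma~\ref{lem:micro}. I expect the Lipschitz estimate for $\Phi$, especially the product structure of the replicator term and the $\langle p\rangle$-dependence entering both terms, to be the only genuinely delicate ingredient; everything else is bookkeeping around Lemmas~\ref{lem:LLN}, \ref{lem:aux1}, and~\ref{lem:Xbyd}.
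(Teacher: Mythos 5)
Your proposal is correct and follows essentially the same route as the paper's own proof: the coupling of both processes through the shared randomness $\omega^k$, the one-step estimate over the four points of mass $1/N$, conversion of quantile-sample differences into bounded Lipschitz distances via Lemma~\ref{lem:Xbyd}, control of the $R$- and mean-dependent terms via Lemma~\ref{lem:aux1}, closing the recursion with the triangle inequality and Lemma~\ref{lem:LLN}, and a discrete Gr\"onwall iteration. If anything, your explicit statement and sketch of the Lipschitz estimate $d(\Phi(\nu),\Phi(\psi))\leq Const\cdot d(\nu,\psi)$ makes precise a step the paper absorbs silently into its constants (the paper proves the analogous property only for the time-evolution operator $T_t$ in Lemma~\ref{lem:Trho}), so your write-up is, in that respect, slightly more complete than the original.
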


\begin{proof}
The proof of the lemma is based on a discrete Gr\"onwall's inequality. 

We first estimate how the distance between one realization of the microscopic process ($\rho_{N,k}^{(1)}$) and one realization of the auxiliary process ($\eta_{N,k}^\text{aux}$) propagates from at a certain update step $k$ to step $k+1$. 
This distance measures the error that occurs upon neglecting correlations of the individuals' production degrees. 
This error propagates on average from one update step $k$ to the next update step $k+1$ as follows:  
\begin{align}\label{eq:one_step}
\left|\mathbb{E}\left(d(\rho_{N,k+1}^{(1)},\eta_{N,k+1}^\text{aux})\right)-\mathbb{E}\left(d(\rho_{N,k}^{(1)},\eta_{N,k}^\text{aux})\right)\right|
\leq \frac{Const}{N}\mathbb{E}\left(\left|X^{\rho_{N,k}^{(1)}}-X^{\eta_{k}^\text{aux}}\right|\right) \ .
\end{align}
To see this estimate, we write:
\begin{align}\nonumber
d(&\rho_{N,k+1}^{(1)},\eta_{N,k+1}^\text{aux})\\ \nonumber
&=
 \sup_{\|f\|_L=1} \left| \int_0^1\d p\ f(p)\rho_{N,k+1}^{(1)}(p)-\int_0^1\d p\ f(p)\eta_{N,k+1}^\text{aux}(p) \right|\ ,\\\nonumber
&\leq\sup_{\|f\|_L=1} \left|  \int_0^1\d p\ f(p) \left(\rho_{N,k}^{(1)}(p)-\eta_{N,k}^\text{aux}(p)\right) \right|\\\nonumber 
&\quad + \frac{Const}{N}\sup_{\|f\|_L=1}  \int_0^1\d p\ f(p) \int_0^1\d \omega^{k+1}\ \big|\text{``realization (micro)''} - \text{``realization (aux)''}   \big|\ .
\end{align}
The latter estimate follows because the distance of the densities between consecutive update steps involves the change of at most four production degrees in the population and, thus, a change of probability mass of order $~\mathcal{O}(1/N)$ from $\rho_{N,k}^{(1)}$ to $\rho_{N,k+1}^{(1)}$ and from $\eta_{N,k}^\text{aux}$ to $\eta_{N,k+1}^\text{aux}$.

We further estimate:
\begin{align}\nonumber
&d(\rho_{N,k+1}^{(1)},\eta_{N,k+1}^\text{aux})\\ \nonumber
&\leq d(\rho_{N,k}^{(1)},\eta_{N,k}^\text{aux})\\\nonumber
&\quad +\frac{Const}{N}\sup_{\|f\|_L=1}  \int_0^1\d p\ f(p) \int_0^1\d \omega^{k+1}\ \left|X^{\rho_{N,k}^{(1)}}(\omega^{k+1}_1) - X^{\Phi(\eta^\text{aux}_{k+1})}(\omega^{k+1}_1)   \right|\\\nonumber
&\quad +\frac{Const}{N}\sup_{\|f\|_L=1}  \int_0^1\d p\ f(p) \int_0^1\d \omega^{k+1}\ \left|X^{\rho^{(1)}_{N,k}-\delta(p-p_{2k+1}^{\text{micro},-}(\omega^{k+1}_1))}(\omega^{k+1}_2) - X^{\eta_{k+1}^\text{aux}}(\omega^{k+1}_2)   \right|\\\nonumber
&\quad +\frac{Const}{N}\sup_{\|f\|_L=1}  \int_0^1\d p\ f(p) \int_0^1\d \omega^{k+1}\ \left|\omega_3^{k+1}\left(X^{\rho_{N,k}^{(1)}}(\omega^{k+1}_1) - X^{\Phi(\eta_{k+1}^\text{aux})}(\omega^{k+1}_1)\right)\right.\\\nonumber
&\qquad\qquad\qquad\qquad\qquad\qquad\qquad+\left.(1-\omega_3^{k+1})\left(R(\langle p\rangle_{\rho_{N,k}^{(1)}}) -R(\langle p\rangle_{\eta^\text{aux}_{k+1}})\right)   \right|\\\nonumber
&\quad +\frac{Const}{N}\sup_{\|f\|_L=1}  \int_0^1\d p\ f(p) \int_0^1\d \omega^{k+1}\ \left|\omega_4^{k+1}\left(X^{\rho_{N,k}^{(1)}}(\omega^{k+1}_1) - X^{\Phi(\eta_{k+1}^\text{aux})}(\omega^{k+1}_1)\right)\right.\\\nonumber
&\qquad\qquad\qquad\qquad\qquad\qquad\qquad+\left. (1-\omega_4^{k+1})\left(R(\langle p\rangle_{\rho_{N,k}^{(1)}}) -R(\langle p\rangle_{\eta^\text{aux}_{k+1}})\right)   \right|\\\nonumber
&\leq d(\rho_{N,k}^{(1)},\eta_{N,k}^\text{aux}) 
+ \frac{Const}{N}\mathbb{E}\left(\left|X^{\rho_{N,k}^{(1)}}-X^{\eta_{k+1}^\text{aux}}\right|\right)+\frac{Const}{N}\mathbb{E}\left(\left|R(\langle p \rangle_{\rho_{N,k}^{(1)}}) -R(\langle p\rangle_{\eta^\text{aux}_{k+1}})\right|\right)\ ,\\\nonumber
&\leq d(\rho_{N,k}^{(1)},\eta_{N,k}^\text{aux})
+ \frac{Const}{N}\mathbb{E}\left(\left|X^{\rho_{N,k}^{(1)}}-X^{\eta_{k}^\text{aux}}\right|\right)
\ .
\end{align}
The last estimate follows with the triangle inequality, the definition of the average auxiliary density~\eqref{def:aux}, and Lemma~\ref{lem:Xbyd}. 
By applying Lemma~\ref{lem:aux1} to the last line above, estimate~\eqref{eq:one_step} follows. 

We now determine how the growth of the average error from update step $k$ to $k+1$ depends upon the error at step~$k$.
By applying Lemma~\ref{lem:Xbyd} to the estimate~\eqref{eq:one_step}, which involves the average density of the auxiliary process and not the empirical density, one obtains (also note the different meanings of the expectation values taken above and below): 
\begin{align}\nonumber
&\left|\mathbb{E}\left(d(\rho_{N,k+1}^{(1)},\eta_{N,k+1}^\text{aux})\right)-\mathbb{E}\left(d(\rho_{N,k}^{(1)},\eta_{N,k}^\text{aux})\right)\right|\\\nonumber
&\quad\leq \frac{Const}{N}\mathbb{E}\left(d(\rho_{N,k}^{(1)},\eta_{k}^\text{aux})\right)\ , \\\nonumber
&\quad\leq \frac{Const}{N}\mathbb{E}\left(d(\rho_{N,k}^{(1)},\eta_{N,k}^\text{aux})\right)+\frac{Const}{N}\mathbb{E}\left(d(\eta_{N,k}^\text{aux},\eta_{k}^\text{aux})\right)\ ,\\\nonumber\label{eq:Gronwall_one_step}
&\quad\leq \frac{Const}{N}\mathbb{E}\left(d(\rho_{N,k}^{(1)},\eta_{N,k}^\text{aux})\right)+\frac{Const}{N}\left(\frac{k^{3/4}}{N}+\mathbb{E}\left(d(\eta^\text{aux}_{N,0},\eta^\text{aux}_{0})\right)\right)\ .
\end{align}
The estimate in the first line above follows with the triangle inequality, and the second estimate follows with the law of large numbers argument from Lemma~\ref{lem:LLN}. 
Essentially, the growth of the average error from update step $k$ to $k+1$ can be attributed to the following sources: (i) propagation of errors from update step $k$, (ii) creation of ``new'' errors at the update step $k+1$ because of the law of large numbers, and (iii) propagation of initial errors.

From the growth of errors between two consecutive update steps, the growth of errors for any given finite number of update steps $K$ can be controlled with Gr\"onwall's inequality as we show next.
Gr\"onwall's inequality for differentiable functions $u$ states that if $u^\prime (t)$ is bounded by $u^\prime (t)\leq \alpha u(t) +\alpha\beta$ with $\alpha, \beta \in \mathbb{R}$, then it follows that $u(t)$ is bounded by the solution of the differential equation given of the right-hand side ($u^\prime (t)= \alpha u(t) +\alpha\beta$) as $u(t)\leq u(0) e^{\alpha t}+ \beta(e^{\alpha t}-1)$.
In the spirit of a discrete version of Gr\"onwall's inequality applied to the Estimate~\eqref{eq:Gronwall_one_step}, one finds a constant $0<Const<\infty$ such that for a given update step $K$:
\begin{align}\nonumber
\left|\mathbb{E}\left(d(\rho_{N,K}^{(1)},\eta_{N,K}^\text{aux})\right)\right|\ \leq  e^{Const\cdot\frac{K}{N}}\left(\mathbb{E}\left(d(\rho_{N,0}^{(1)},\eta_{N,0}^\text{aux})\right)+\frac{K^{3/4}}{N}\right)\ ,
\end{align} 
which concludes the proof of Lemma~\ref{lem:micro}.
\end{proof}

%\clearpage
%%%%%%%%%%%%%%%%%%%%%%%%%%%%%%%%%%%%%%%%%%%%%%%%%%%%%%%%%%%%%%%%%%%%
\subsection{Convergence of the auxiliary to the macroscopic process ($\eta^\text{aux}_{\kappa(t)}\tos\rho_{t})$ -- Continuous time limit and control of time synchronization}

We now show that the mean-field density of the macroscopic process ($\rho_t$) converges in probability to the average density of the auxiliary process ($\eta^\text{aux}_{\kappa(t)}$) as $N\to \infty$, see Lemma~\ref{lem:macro}.
In other words, we show that the average auxiliary density at update step $\kappa(t)$ (the maximal number such that $\tau_{\kappa(t)}\leq t$) stays close to the mean-field density at the continuous time $t$. 

\begin{defn}\label{def:T}
Let $\nu\in(L^\infty)^\star$ be a time-dependent density function. We define the time evolution operator: 
\begin{align}
T_t(\nu)\coloneqq
\int_{0}^t\d {t^\prime}\ \big[2\lambda (1+s(b-c)\langle p\rangle_{\nu_{t^\prime}})
	\big(\delta(p-R(\langle p\rangle_{\nu_{t^\prime}}))-\nu(p,{t^\prime})\big)
+ (1-2\lambda)sc\big(\langle p\rangle_{\nu_{t^\prime}}-p\big)\nu(p,{t^\prime})\big] \ ,
\end{align}
with $\langle p\rangle_{\nu_{t^\prime}} = \int_0^1\d p\ p\, \nu(p,{t^\prime})$.
\end{defn}

With this definition, the time evolution of the empirical density of the macroscopic process is given by (see Definition~\eqref{eq:fitness} of the fitness and the mean-field equation~\eqref{eq:mean-field}): 
\begin{align}\label{eq:macro2b}
 \rho(p,t) = T_t(\rho)(p) +\rho_0(p)\ .
\end{align}

\begin{lem}\label{lem:Trho}
Let $\nu\in(L^\infty)^\star$ and $\psi\in(L^\infty)^\star$ be one-particle probability densities. 
Then 
\begin{align}\nonumber
d\left(T_t(\nu),T_t(\psi)\right)\leq Const\cdot\int_0^t\d {t^\prime}\ d(\nu_{t^\prime},\psi_{t^\prime})\ .
\end{align}
\end{lem}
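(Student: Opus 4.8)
The plan is to estimate the bounded Lipschitz distance directly from its definition~\eqref{eq:BL_metric}. Fix a test function $f$ with $\|f\|_L=1$ and $f(0)=0$, so that $|f(p)|=|f(p)-f(0)|\le p\le 1$ on $[0,1]$. Since $T_t$ is itself an integral in time, Fubini's theorem and the triangle inequality give, for every such $f$,
\[
\left|\int_0^1\d p\ f(p)\big(T_t(\nu)-T_t(\psi)\big)(p)\right|
\le \int_0^t \d {t^\prime}\ \left|\int_0^1\d p\ f(p)\big(A(\nu_{t^\prime})-A(\psi_{t^\prime})\big)(p)\right|,
\]
where $A(\nu_{t^\prime})$ denotes the integrand of $T_t$ at time $t^\prime$ (see Definition~\ref{def:T}). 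Taking the supremum over admissible $f$ and using $\sup_f\int\le\int\sup_f$ reduces the lemma to the pointwise-in-time estimate $\sup_f\big|\int_0^1\d p\ f(p)(A(\nu_{t^\prime})-A(\psi_{t^\prime}))\big|\le Const\cdot d(\nu_{t^\prime},\psi_{t^\prime})$, which upon integration over $t^\prime$ yields the claim.

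For fixed $t^\prime$ I would write out the tested integrand explicitly. Abbreviating $m_\nu=\langle p\rangle_{\nu_{t^\prime}}$, the sense-and-response part contributes $2\lambda\big(1+s(b-c)m_\nu\big)\big(f(R(m_\nu))-\langle f\rangle_{\nu_{t^\prime}}\big)$ and the replicator part contributes $(1-2\lambda)sc\big(m_\nu\langle f\rangle_{\nu_{t^\prime}}-\langle pf\rangle_{\nu_{t^\prime}}\big)$, with the analogous expressions for $\psi$. The difference of these expressions is a difference of products, so I would control it by the add-and-subtract inequality $|AB-A'B'|\le|A|\,|B-B'|+|B'|\,|A-A'|$ applied factor by factor.

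All the ingredients needed are already in place. By Lemma~\ref{lem:aux1} applied to the identity $p\mapsto p$ (Lipschitz norm $1$) one has $|m_\nu-m_\psi|\le d(\nu_{t^\prime},\psi_{t^\prime})$; applied to $f$ and to $pf$ (for which $\|pf\|_L\le 2$, using $|f|\le 1$ and $\|f\|_L=1$) it gives $|\langle f\rangle_{\nu_{t^\prime}}-\langle f\rangle_{\psi_{t^\prime}}|\le d(\nu_{t^\prime},\psi_{t^\prime})$ and $|\langle pf\rangle_{\nu_{t^\prime}}-\langle pf\rangle_{\psi_{t^\prime}}|\le 2\,d(\nu_{t^\prime},\psi_{t^\prime})$. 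Every factor is bounded on $[0,1]$: $m_\nu,m_\psi\in[0,1]$, $|\langle f\rangle|\le 1$, $|f(R(\cdot))|\le 1$, and $|1+s(b-c)m|\le 1+s|b-c|$. Finally, assuming $R$ globally Lipschitz (as is implicitly used already in the proof of Lemma~\ref{lem:micro}), one has $|f(R(m_\nu))-f(R(m_\psi))|\le\|R\|_L\,|m_\nu-m_\psi|\le\|R\|_L\,d(\nu_{t^\prime},\psi_{t^\prime})$. Combining these bounds, each of the two parts is at most a constant depending only on $\lambda,s,b,c,\|R\|_L$ times $d(\nu_{t^\prime},\psi_{t^\prime})$, uniformly in $f$, which is exactly the pointwise estimate required above.

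The only genuinely delicate point is the bilinear dependence of $A(\nu_{t^\prime})$ on its argument through the mean $m_\nu$, which multiplies the $f$-averages (in $m_\nu\langle f\rangle_\nu$) and sits inside the fitness prefactor $1+s(b-c)m_\nu$; these must be split by the product inequality above and then bounded using the uniform boundedness of each factor together with $|f|\le 1$. Everything else is a routine combination of Lemma~\ref{lem:aux1} and the triangle inequality, and the exchange of supremum and time integral is the standard $\sup\int\le\int\sup$ bound requiring no further justification.
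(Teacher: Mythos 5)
Your proof is correct, and at its heart it is the same argument as the paper's: an add-and-subtract (bilinearity) estimate combined with Lemma~\ref{lem:aux1} and uniform boundedness of all prefactors. The difference is in where the splitting happens. The paper adds and subtracts a cross term at the level of measures inside the BL norm, applies the triangle inequality, and then compresses the rest into one sentence (``since $c,b,s,\lambda,\langle p\rangle_{\nu_{t^\prime}},\dots$ are uniformly bounded\dots''); that sentence silently absorbs the term $\bigl\|(1+s(b-c)\langle p\rangle_{\nu_{t^\prime}})\,\delta(p-R(\langle p\rangle_{\nu_{t^\prime}}))-(1+s(b-c)\langle p\rangle_{\psi_{t^\prime}})\,\delta(p-R(\langle p\rangle_{\psi_{t^\prime}}))\bigr\|_{BL}$, whose control by $Const\cdot|\langle p\rangle_{\nu_{t^\prime}}-\langle p\rangle_{\psi_{t^\prime}}|$ rests on $\|\delta(\cdot-x)-\delta(\cdot-y)\|_{BL}\leq|x-y|$ together with the Lipschitz continuity of $R$ --- an assumption the paper never states. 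You instead pass to duality immediately, testing against $f$ and working with the scalar quantities $m_\nu$, $\langle f\rangle_\nu$, $\langle pf\rangle_\nu$, $f(R(m_\nu))$; this forces that hidden point into the open, since your bound $|f(R(m_\nu))-f(R(m_\psi))|\leq\|R\|_L\, d(\nu_{t^\prime},\psi_{t^\prime})$ is exactly where the regularity of $R$ is used. The cost of your route is a little more bookkeeping (Lemma~\ref{lem:aux1} invoked for the three test functions $p$, $f$, and $pf$ with $\|pf\|_L\leq 2$, plus the standard $\sup\text{-}\int$ exchange); the benefit is that every constant is explicit and the regularity actually required of the response function is identified rather than left implicit.
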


\begin{proof}

After suitable rewriting, we use the triangle inequality to estimate $d\left(T_t(\nu),T_t(\psi)\right)$ as follows:
\begin{align}\nonumber
&d\left(T_t(\nu),T_t(\psi)\right)=\left\|T_t(\nu)-T_t(\psi)\right\|_{BL}\\\nonumber
&=\big\|\int_{0}^t \d{t^\prime}\ \big[2\lambda (1+s(b-c)\langle p\rangle_{\nu_{t^\prime}})\big(\delta(p-R(\langle p\rangle_{\nu_{t^\prime}}) -\nu(p,{t^\prime})\big)
+ (1-2\lambda)sc\big(\langle p\rangle_{\nu_{t^\prime}}-p\big)\nu(p,{t^\prime})\\\nonumber
&\quad-2\lambda (1+s(b-c)\langle p\rangle_{\psi_{t^\prime}})\big(\delta(p-R(\langle p\rangle_{\psi_{t^\prime}}) -\psi(p,{t^\prime})\big)
+ (1-2\lambda)sc\big(\langle p\rangle_{\psi_{t^\prime}}-p\big)\psi(p,{t^\prime}) \big] \big\|_{BL}\ ,\\\nonumber
&= \big\|\int_{0}^t \d{t^\prime}\ \big[2\lambda (1+s(b-c)\langle p\rangle_{\nu_{t^\prime}})\big(\delta(p-R(\langle p\rangle_{\nu_{t^\prime}}) -\nu(p,{t^\prime})\big)
+ (1-2\lambda)sc\big(\langle p\rangle_{\nu_{t^\prime}}-p\big)\nu(p,{t^\prime})\\\nonumber
&\quad-2\lambda (1+s(b-c)\langle p\rangle_{\psi_{t^\prime}})\big(\delta(p-R(\langle p\rangle_{\psi_{t^\prime}}) -\psi(p,{t^\prime})\big)
+ (1-2\lambda)sc\big(\langle p\rangle_{\psi_{t^\prime}}-p\big)\psi(p,{t^\prime}) \\\nonumber
&\quad+2\lambda (1+s(b-c)\langle p\rangle_{\nu_{t^\prime}})\big(\delta(p-R(\langle p\rangle_{\nu_{t^\prime}}) -\psi(p,{t^\prime})\big)
+ (1-2\lambda)sc\big(\langle p\rangle_{\nu_{t^\prime}}-p\big)\psi(p,{t^\prime})\\\nonumber
&\quad-2\lambda (1+s(b-c)\langle p\rangle_{\nu_{t^\prime}})\big(\delta(p-R(\langle p\rangle_{\nu_{t^\prime}}) -\psi(p,{t^\prime})\big)
+ (1-2\lambda)sc\big(\langle p\rangle_{\nu_{t^\prime}}-p\big)\psi(p,{t^\prime}) \big] \big\|_{BL}\ ,\\\nonumber
&\leq \big\|\int_{0}^t \d{t^\prime}\ \big[
2\lambda (1+s(b-c)\langle p\rangle_{\nu_{t^\prime}})\big(\psi(p,{t^\prime})-\nu(p,{t^\prime})\big)
+(1-2\lambda)sc\big(\langle p\rangle_{\nu_{t^\prime}}-p\big)\big(\nu(p,{t^\prime})-\psi(p,{t^\prime})\big)\big]\big\|_{BL}\\\nonumber
&\quad+\big\|\int_{0}^t \d{t^\prime}\ \big[
2\lambda (1+s(b-c)\langle p\rangle_{\nu_{t^\prime}})\delta(p-R(\langle p\rangle_{\nu_{t^\prime}})
-2\lambda (1+s(b-c)\langle p\rangle_{\psi_{t^\prime}})\delta(p-R(\langle p\rangle_{\psi_{t^\prime}}) \\\nonumber
&\quad +s(c-2\lambda b)\big(\langle p\rangle_{\nu_{t^\prime}}-\langle p\rangle_{\psi_{t^\prime}}\big)\psi(p,{t^\prime})\big]\big\|_{BL}\ .
\end{align}

Since $c,b,s,\lambda,\langle p\rangle_{\nu_{t^\prime}},\langle p\rangle_{\psi_{t^\prime}},\|\nu\|_{BL}$ and $\|\psi\|_{BL}$ are uniformly bounded, it follows that there exists a constant $0<Const<\infty$ such that
\begin{align}\nonumber
d\left(T_t(\nu),T_t(\psi)\right)
&\leq Const\cdot \int_{0}^t \d{t^\prime}\ \big[\big\| \psi(p,{t^\prime}) -\nu(p,{t^\prime})\big\|_{BL}+\left|\langle p\rangle_{\nu_{t^\prime}}-\langle p\rangle_{\psi_{t^\prime}}\right|\big]\ .
\end{align}
Using Lemma~\ref{lem:aux1} it follows that there exists (another) constant $0<Const<\infty$ such that
\begin{align}\nonumber
d\left(T_t(\nu),T_t(\psi)\right)
\leq Const\cdot\int_{0}^t \d{t^\prime}\ d(\nu_{t^\prime},\psi_{t^\prime}) \ .
\end{align}

\end{proof}

To prepare the continuous time limit, we show that the average time $\tau_k$, at which updates of the population occur, stays on average close to the continuous time. The proof proceeds by applying a law of large numbers argument.
Recall from Definition~\ref{def:microtime} that we denoted the random variable for the length of the time interval between the $k^{\text{th}}$ and $(k+1)^{\text{th}}$ update step as $\Delta\tau_k$ with $\langle\Delta\tau_k\rangle = N^{-1} \left(1+s(b-c)\langle p\rangle_k\right)^{-1}$. 
We also defined  $\tau_k= \sum_{l=1}^{k-1} \Delta\tau_l$ and  $\langle \tau_k\rangle =\sum_{l=1}^{k-1} \langle\Delta\tau_l \rangle$; $M(t)$ denotes the maximal natural number such that $\langle \tau_{M(t)}\rangle \leq t$ and $\kappa(t)$ is the random variable given by the maximal number such that $\tau_{\kappa(t)}\leq t$.

\begin{lem}\label{lem:time}
Let $t>0$. Then 
\begin{align}\nonumber
\mathbb{P}(|\kappa(t)-M(t)|\geq N^{3/4})\leq Const\cdot N^{-1/2}\ .
\end{align}
\end{lem}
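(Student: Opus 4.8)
The plan is to translate the discrepancy between the random number of updates $\kappa(t)$ and the count $M(t)$ into a large deviation of the random partial sum $\tau_k$ from its compensator $\langle\tau_k\rangle$, and then to control that deviation by a second-moment estimate. First I would record the elementary bound that, because $\langle p\rangle_k\in[0,1]$ while $s,b,c$ are fixed, the mean waiting times satisfy $a/N \le \langle\Delta\tau_{l}\rangle \le b/N$ for constants $0<a\le b<\infty$. Consequently both $\kappa(t)$ and $M(t)$ lie in a deterministic window $[c_1 N, c_2 N]$, and in particular $M(t)+N^{3/4}\le K:=\lceil c_2 N\rceil$ for all large $N$.

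The central object is the centered process $S_k := \tau_k - \langle\tau_k\rangle = \sum_{l=1}^{k-1}\big(\Delta\tau_l - \mathbb{E}[\Delta\tau_l\mid\mathcal{F}_{l-1}]\big)$, where $\mathcal{F}_{l}$ denotes the filtration generated by the update steps. By Definition~\ref{def:microtime} one has $\langle\Delta\tau_l\rangle = 1/\mu_{l-1} = \mathbb{E}[\Delta\tau_l\mid\mathcal{F}_{l-1}]$, so $S_k$ is a martingale. Its increments have conditional variance $\text{Var}(\Delta\tau_l\mid\mathcal{F}_{l-1}) = \mu_{l-1}^{-2} = \langle\Delta\tau_l\rangle^2 \le b^2/N^2$, since the variance of an exponential equals the square of its mean; hence $\mathbb{E}[S_K^2] = \sum_{l=1}^{K-1}\mathbb{E}[\langle\Delta\tau_l\rangle^2] \le K b^2/N^2 \le Const/N$. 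Doob's $L^2$ maximal inequality then gives $\mathbb{E}\big(\max_{1\le j\le K} S_j^2\big) \le 4\,\mathbb{E}[S_K^2] \le Const/N$.

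Next I would relate both directions of the event to this maximum. On $\{\kappa(t)\ge M(t)+N^{3/4}\}$ one has $\tau_{M(t)+N^{3/4}}\le t$, while the lower bound on the waiting times together with $t<\langle\tau_{M(t)+1}\rangle$ gives $\langle\tau_{M(t)+N^{3/4}}\rangle > t + (N^{3/4}-1)a/N$; subtracting shows $S_{M(t)+N^{3/4}} < -(N^{3/4}-1)a/N \le -Const\cdot N^{-1/4}$. Symmetrically, on $\{\kappa(t)\le M(t)-N^{3/4}\}$ one finds $S_{M(t)-N^{3/4}+1} > Const\cdot N^{-1/4}$. Since both indices lie in $\{1,\dots,K\}$, each event forces $\max_{1\le j\le K}|S_j| \ge Const\cdot N^{-1/4}$; passing to the maximum over the deterministic range is exactly what removes the need to evaluate $S$ at the random index $M(t)\pm N^{3/4}$ and accommodates the randomness of the compensator $\langle\tau_k\rangle$.

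Finally, Markov's inequality applied to $\max_j S_j^2$ yields
\[
\mathbb{P}\big(|\kappa(t)-M(t)|\ge N^{3/4}\big) \le \mathbb{P}\Big(\max_{1\le j\le K}|S_j|\ge Const\cdot N^{-1/4}\Big) \le \frac{\mathbb{E}(\max_j S_j^2)}{Const\cdot N^{-1/2}} \le Const\cdot N^{-1/2},
\]
which is the claim. The hard part will be the bookkeeping in the third step: converting the integer gap $|\kappa(t)-M(t)|\ge N^{3/4}$ into the time-scale deviation $|S|\gtrsim N^{-1/4}$ requires the two-sided control of the waiting times together with the maximal inequality. Once the matching of thresholds $N^{3/4}\leftrightarrow N^{-1/4}$ against the variance scale $1/N$ is in place, Chebyshev delivers exactly the $N^{-1/2}$ rate.
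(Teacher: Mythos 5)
Your proof is correct and arrives at the paper's bound with the same scaling skeleton (fluctuation of $\tau_k$ around $\langle\tau_k\rangle$ with variance of order $1/N$, deviation threshold of order $N^{-1/4}$, second-moment bound $\Rightarrow N^{-1/2}$), but the implementation is genuinely different. The paper treats each side separately via monotonicity, $\mathbb{P}(\kappa(t)<M(t)-N^{3/4})\leq\mathbb{P}(\tau_{M(t)-N^{3/4}}>t)$, and then applies Chebyshev to $\tau_{M(t)-N^{3/4}}-\langle\tau_{M(t)-N^{3/4}}\rangle$ at that single index, summing the variances of the $\Delta\tau_k$ on the stated grounds that they are independent, with threshold $\epsilon_N=\left|t-\langle\tau_{M(t)-N^{3/4}}\rangle\right|$. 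You instead recognize $S_k=\tau_k-\langle\tau_k\rangle$ as a martingale (since $\langle\Delta\tau_l\rangle=1/\mu_{l-1}$ is exactly the conditional mean $\mathbb{E}[\Delta\tau_l\mid\mathcal{F}_{l-1}]$), bound $\mathbb{E}[S_K^2]$ by summing conditional variances, and control the maximum over a deterministic index range by Doob's $L^2$ inequality before invoking Chebyshev. Your route buys rigor on two points the paper elides: first, the waiting times are not unconditionally independent (their rates depend on the random state $\langle p\rangle_k$), so additivity of variances is really a statement about orthogonal martingale increments, which is exactly how you phrase it; second, the compensator $\langle\tau_k\rangle$ and hence $M(t)$ are themselves random, so Chebyshev evaluated at the random index $M(t)\pm N^{3/4}$ with a random threshold needs justification, which your maximum over the deterministic range $\{1,\dots,K\}$ supplies at the cheap price of Doob's factor $4$. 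Moreover, your inequality $\langle\tau_{M(t)+N^{3/4}}\rangle>t+(N^{3/4}-1)a/N$ is the \emph{lower} bound on the deviation threshold that the Chebyshev step actually requires; the paper's displayed computation bounds its $\epsilon_N$ from above, although the needed lower bound holds there too, for the same reason as in your argument. The paper's proof is shorter; yours is more robust.

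One blemish: your opening claim that $\kappa(t)$ lies in a deterministic window $[c_1N,c_2N]$ is false, since the exponential waiting times are unbounded and can be arbitrarily small, so $\kappa(t)$ is not deterministically confined. Fortunately you never use it: the argument only needs the deterministic bound on $M(t)$, which does hold because every $\langle\Delta\tau_l\rangle\geq a/N$.
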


\begin{proof}
The lemma is based on the law of large numbers.
Since the $\kappa(t)$ is monotonously increasing, it follows that 
\begin{align}\nonumber
\mathbb{P}(\kappa(t)<M(t)-N^{3/4})
&\leq \mathbb{P}(\tau_{M(t)-N^{3/4}}>t)\ ,\\\nonumber
&=\mathbb{P}(\tau_{M(t)-N^{3/4}}-\langle \tau_{M(T)-N^{3/4}}\rangle > t-\langle \tau_{M(t)-N^{3/4}}\rangle)\ ,\\\label{eq:probabsch}
&\leq \mathbb{P}\left(\left|\tau_{M(t)-N^{3/4}}-\langle \tau_{M(t)-N^{3/4}}\rangle\right|>\left|t-\langle \tau_{M(t)-N^{3/4}}\rangle\right|\right)\ .
\end{align}
Because the $\tau_k$ are independent of each other, it follows with Chebyshev's inequality that for any (possibly $N$-dependent) $\epsilon_N>0$:
\begin{align}\nonumber
\mathbb{P}\left(\left|\tau_{M(t)-N^{3/4}}-\langle \tau_{M(t)-N^{3/4}}\rangle\right|>\epsilon_N\right)
&\leq\epsilon_N^{-2}\text{Var}(\tau_{M(t)-N^{3/4}})\ ,\\\nonumber
&=\epsilon_N^{-2}\sum_{k=1}^{M(t)-N^{3/4}}\Var{\Delta\tau_k}\ ,\\\nonumber
&\leq Const\cdot\frac{N}{\epsilon_N^2 N^2}\ .
\end{align}
 Since the average lengths of time intervals between two update steps, $\langle\Delta\tau_k\rangle$, are bounded for all $k$ by some constant times $N^{-1}$, the respective variances are of order $N^{-2}$. 
The estimate in the last line above then follows by recalling that $M(t)\sim\mathcal{O}(N)$.

We choose $\epsilon_N \coloneqq \left|t-\langle \tau_{M(t)-N^{3/4}}\rangle\right|$ and estimate:
\begin{align}\nonumber
\epsilon_N 
&=  |\langle \tau_{M(t)-N^{3/4}}\rangle-t|\ ,\\\nonumber
&= \left|\sum_{k=1}^{M(t)-N^{3/4}}\langle\Delta\tau_k\rangle-t\right|\ ,\\\nonumber
&\leq  \left|\sum_{k=1}^{M(t)-N^{3/4}}\langle\Delta\tau_k\rangle-\sum_{k=1}^{M(t)}\langle\Delta\tau_k\rangle\right|+\frac{Const}{N}\ ,\\\nonumber
&\leq N^{3/4}\frac{Const}{N}\ ,\\\nonumber
&= Const\cdot N^{-1/4}\ .
\end{align}

Therefore, one obtains from Chebyshev's inequality with the chosen $\epsilon_N$:
\begin{align}\nonumber
\mathbb{P}\left(\left|\tau_{M(t)-N^{3/4}}-\langle \tau_{M(t)-N^{3/4}}\rangle\right|>\epsilon_N\right)
\leq Const\cdot N^{-1/2}\ .
\end{align}
From Equation~\eqref{eq:probabsch} one obtains:
\begin{align}\nonumber
\mathbb{P}(\kappa(t)<M(t)-N^{3/4})\leq CN^{-1/2}\ .
\end{align}
In the same way one shows that:
\begin{align}\nonumber
\mathbb{P}(\kappa(t)>M(t)+N^{3/4})\leq CN^{-1/2}\ ,
\end{align}
and the lemma follows.
\end{proof}

After these preparatory steps, we now proceed with the following lemma, which estimates the average distance between the mean-field density at the real time $t$ and the average auxiliary density at update step $\kappa(t)$ (that is the random variable given by the maximal number such that $\tau_{\kappa(t)}\leq t$). 
The proof exploits Gr\"onwall's inequality.
%
%%%%
\begin{lem}\label{lem:macro}

For any $t>0$, one estimates: 
\begin{align}\nonumber
\mathbb{E}\left(d(\eta^\text{aux}_{\kappa(t)},\rho_{t}) \right)\leq Const(t)\cdot N^{-1/4}\ .	%\|\eta^\text{aux}_{\kappa(t)}-\rho_{t}\|_{BL}
\end{align}
%
%Note that the expectation value is taken with respect to the random variable $\kappa(t)$.
\end{lem}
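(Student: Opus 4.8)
The plan is to show that a single update step of the average auxiliary process advances the mean-field solution by exactly one average waiting time, and then to control separately (i) the error of this Euler-type discretization and (ii) the error of replacing the deterministic step count $M(t)$ by the random count $\kappa(t)$. The algebraic hinge is the identity
\[
G(\nu)\coloneqq 2\lambda\langle\phi\rangle_\nu\big(\delta(p-R(\langle p\rangle_\nu))-\nu\big)+(1-2\lambda)\big(\phi(p,\langle p\rangle_\nu)-\langle\phi\rangle_\nu\big)\nu=\langle\phi\rangle_\nu\big(\Phi(\nu)-\nu\big),
\]
where $G$ is the right-hand side of the mean-field equation~\eqref{eq:mean-field}, so that $\rho_t=\rho_0+\int_0^t G(\rho_{t'})\,\mathrm{d}{t'}$ (cf.~\eqref{eq:macro2b}), and $\Phi$ is the reproduction density of Definition~\ref{def:Phi}. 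Since the average update of Definition~\ref{def:aux} reads $\eta^\text{aux}_{k+1}-\eta^\text{aux}_k=\tfrac1N(\Phi(\eta^\text{aux}_k)-\eta^\text{aux}_k)$ and the average waiting time along the mean-field trajectory is $\langle\Delta\tau_{k+1}\rangle=\tfrac1N\langle\phi\rangle^{-1}$ (Definition~\ref{def:microtime}), the identity gives $\langle\Delta\tau_{k+1}\rangle\,G(\rho_{\langle\tau_k\rangle})=\tfrac1N(\Phi(\rho_{\langle\tau_k\rangle})-\rho_{\langle\tau_k\rangle})$: one auxiliary step is precisely the left-endpoint Euler step of the mean-field flow over $[\langle\tau_k\rangle,\langle\tau_{k+1}\rangle]$.

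First I would compare $\eta^\text{aux}_k$ with $\rho_{\langle\tau_k\rangle}$ (the mean-field density at the average update time) by a discrete Gr\"onwall argument on $e_k\coloneqq d(\eta^\text{aux}_k,\rho_{\langle\tau_k\rangle})$. Subtracting the two increments and inserting $\pm\tfrac1N(\Phi(\rho_{\langle\tau_k\rangle})-\rho_{\langle\tau_k\rangle})$ splits the one-step error into a stability term $\tfrac1N\,d\big(\Phi(\eta^\text{aux}_k)-\eta^\text{aux}_k,\Phi(\rho_{\langle\tau_k\rangle})-\rho_{\langle\tau_k\rangle}\big)$ and a quadrature remainder $R_k=\int_{\langle\tau_k\rangle}^{\langle\tau_{k+1}\rangle}\big(G(\rho_{\langle\tau_k\rangle})-G(\rho_{t'})\big)\,\mathrm{d}{t'}$. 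For the stability term I would establish a Lipschitz estimate in the bounded-Lipschitz metric, $d(\Phi(\nu),\Phi(\psi))\leq Const\cdot d(\nu,\psi)$, by exactly the computation used in the proof of Lemma~\ref{lem:Trho} (the response part is handled by Lemma~\ref{lem:aux1} applied to $\langle p\rangle_\nu$ together with Lipschitz continuity of $R$, the fitness-weighted part by boundedness of $\phi$ and of $\langle p\rangle$), so this term is $\leq\tfrac{Const}{N}e_k$. Since $\rho$ is Lipschitz in time in the $d$-metric (its generator $G$ is uniformly bounded in bounded-Lipschitz norm) and the step length is $\mathcal{O}(1/N)$, the remainder obeys $\|R_k\|_{BL}\leq Const\cdot N^{-2}$. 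Hence $e_{k+1}\leq(1+\tfrac{Const}{N})e_k+Const\cdot N^{-2}$, and with $e_0=d(\rho_0,\rho_0)=0$ the discrete Gr\"onwall inequality yields $e_k\leq \tfrac{Const}{N}\big(e^{Const\cdot k/N}-1\big)$, so that $d(\eta^\text{aux}_{M(t)},\rho_{\langle\tau_{M(t)}\rangle})\leq Const(t)\cdot N^{-1}$ because $M(t)\sim\mathcal{O}(N)$.

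It then remains to pass from $M(t)$ and $\langle\tau_{M(t)}\rangle$ to the real time $t$ and the random count $\kappa(t)$. By the definition of $M(t)$ one has $|t-\langle\tau_{M(t)}\rangle|\leq\langle\Delta\tau_{M(t)+1}\rangle\leq Const\cdot N^{-1}$, so time-Lipschitzness gives $d(\rho_{\langle\tau_{M(t)}\rangle},\rho_t)\leq Const\cdot N^{-1}$. For the random count I would use that each update moves at most four production degrees and hence changes the density by at most $\mathcal{O}(1/N)$ in $d$, so $d(\eta^\text{aux}_{\kappa(t)},\eta^\text{aux}_{M(t)})\leq\tfrac{Const}{N}|\kappa(t)-M(t)|$, while each $\eta^\text{aux}_k$ is a genuine probability density and hence this distance is also bounded by the constant $2$. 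Splitting the expectation over the event $\{|\kappa(t)-M(t)|<N^{3/4}\}$ and its complement and invoking Lemma~\ref{lem:time} for the complement gives
\[
\mathbb{E}\big(d(\eta^\text{aux}_{\kappa(t)},\eta^\text{aux}_{M(t)})\big)\leq \tfrac{Const}{N}\,N^{3/4}+2\cdot Const\cdot N^{-1/2}\leq Const\cdot N^{-1/4}.
\]
Assembling the three pieces by the triangle inequality, the $N^{-1/4}$ contribution from the fluctuating number of update steps dominates the $N^{-1}$ discretization and time-matching errors, which proves the claim. The main obstacle is the discrete Gr\"onwall step: one must set up the mean-field time grid consistently (so that the average waiting time is taken along the mean-field rather than the microscopic trajectory, the discrepancy being lower order), prove the $\Phi$-Lipschitz bound in the bounded-Lipschitz metric, and control the quadrature remainder uniformly over the $\mathcal{O}(N)$ steps. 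Recognizing that the genuine bottleneck is the $N^{3/4}$ spread of $\kappa(t)$ about $M(t)$, not the discretization, is what pins the rate at $N^{-1/4}$.
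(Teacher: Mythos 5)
Your proposal is correct and reaches the right rate, and its top-level structure coincides with the paper's: the same triangle-inequality split into a random-step-count part and a deterministic discretization part, with the random-count part handled identically (the $N^{3/4}$ event split, Lemma~\ref{lem:time}, the per-step $\mathcal{O}(1/N)$ bound and the trivial bound $2$, giving the dominant $N^{-1/4}$). Where you genuinely diverge is the deterministic half. The paper rewrites the auxiliary recursion as a time integral via the same identity you isolate, $\langle\phi\rangle_\nu\big(\Phi(\nu)-\nu\big)=G(\nu)$ (this is where the factor $N^{-1}\big(1+s(b-c)\langle p\rangle_{\eta^\text{aux}_j}\big)^{-1}=\langle\Delta\tau_j\rangle$ enters), and then compares $\eta^\text{aux}_{M(t)}$ with $\rho_t$ through the time-evolution operator $T_t$ of Definition~\ref{def:T}, invoking Lemma~\ref{lem:Trho} and a \emph{continuous} Gr\"onwall inequality to get $d(\eta^\text{aux}_{M(t)},\rho_t)\leq e^{Const\cdot t}/N$; the $\mathcal{O}(1/N)$ mismatch between $t$ and $\langle\tau_{M(t)}\rangle$ is absorbed inside that same estimate. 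You instead run a \emph{discrete} Gr\"onwall on $e_k=d(\eta^\text{aux}_k,\rho_{\langle\tau_k\rangle})$, which is more elementary and treats the auxiliary recursion directly as a left-endpoint Euler scheme, but it forces you to formulate and prove two ingredients the paper never needs explicitly: a Lipschitz bound $d(\Phi(\nu),\Phi(\psi))\leq Const\cdot d(\nu,\psi)$ in the bounded Lipschitz metric (provable, as you say, by the computation inside Lemma~\ref{lem:Trho} together with Lemma~\ref{lem:aux1}), and the per-step quadrature bound $\|R_k\|_{BL}\leq Const\cdot N^{-2}$, plus a separate $\mathcal{O}(1/N)$ time-matching step at the end. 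Both routes give $\mathcal{O}(1/N)$ for this half, so the conclusion is unaffected. One further point in your favor: you explicitly flag the grid-consistency issue (the average waiting times in Definition~\ref{def:microtime} are defined along the microscopic trajectory, while the Gr\"onwall argument needs them along the mean-field/auxiliary trajectory); the paper silently makes the same substitution in the proof of this lemma, so naming the discrepancy and noting it is lower order is a small improvement in transparency rather than a gap.
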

%%%%
%

\begin{proof}
We apply the triangle inequality and estimate:
\begin{align}\nonumber%\label{eq:timing_split}
\mathbb{E}\left(d(\eta^\text{aux}_{\kappa(t)},\rho_{t})\right)
&\leq  \mathbb{E}\left(d(\eta^\text{aux}_{\kappa(t)},\eta^\text{aux}_{M(t)})\right)
+ d(\eta^\text{aux}_{M(t)},\rho_{t})
\ .	
\end{align}
Note that the expectation values above are taken with respect to sampling the update times.
The first summand (i) addresses the distance of the average auxiliary density between different update steps; namely between a single realization of update steps ($\kappa(t)$) up to the given time $t$ and the average number of update steps ($M(t)$) up to time $t$.
By a law of large numbers argument, we show below that:
\begin{align}\label{eq:timing_time}
 \mathbb{E}\left(d(\eta^\text{aux}_{\kappa(t)},\eta^\text{aux}_{M(t)})\right)\leq Const\cdot N^{-1/4}
\ ,
\end{align}
The second summand (ii) governs the distance of the auxiliary process at average times to the macroscopic process at the real time $t$. We show below that the propagation of errors due to different timings of the auxiliary process and the macroscopic process are controlled by applying Gr\"onwall's inequality, and estimate
\begin{align}\label{eq:timing_groenwall}
d(\eta^\text{aux}_{M(t)},\rho_{t})\leq \frac{1}{N}e^{Const\cdot t}\  .	
\end{align}

(i) First, we estimate $ \mathbb{E}\left(d(\eta^\text{aux}_{\kappa(t)},\eta^\text{aux}_{M(t)})\right)$ in Equation~\eqref{eq:timing_time} by splitting up the expectation value as follows:
\begin{align}\nonumber
 \mathbb{E}\left(d(\eta^\text{aux}_{\kappa(t)},\eta^\text{aux}_{M(t)})\right)
&\leq \sup\left\{\|\eta^\text{aux}_{\kappa(t)}-\eta^\text{aux}_{M(t)}\|_{BL}: |\kappa(t)-M(t)|\geq N^{3/4}\right\}\cdot\mathbb{P}(|\kappa(t)-M(t)|\geq N^{3/4}) \\\nonumber
&\quad+ \sup\left\{\|\eta^\text{aux}_{\kappa(t)}-\eta^\text{aux}_{M(t)}\|_{BL}: |\kappa(t)-M(t)|\leq N^{3/4}\right\}\cdot\mathbb{P}(|\kappa(t)-M(t)|\leq N^{3/4})\ ,\\\nonumber
&\leq \sup\left\{\|\eta^\text{aux}_{\kappa(t)}-\eta^\text{aux}_{M(t)}\|_{BL}\right\}\mathbb{P}(|\kappa(t)-M(t)|\geq N^{3/4}) \\\nonumber
&\quad+\sup\left\{\|\eta^\text{aux}_{\kappa(t)}-\eta^\text{aux}_{M(t)}\|_{BL}: |\kappa(t)-M(t)|\leq N^{3/4}\right\}
\ .
\end{align}
Using the fact that for any probability densities $\nu$ and $\psi$ it is $d_{BL}(\nu,\psi)\leq\|\nu\|_1+\|\psi\|_1=2$, one estimates: 
\begin{align}\nonumber
\sup\left\{\|\eta^\text{aux}_{\kappa(t)}-\eta^\text{aux}_{M(t)}\|_{BL}\right\}\leq 2\ .
\end{align}
From Lemma~\ref{lem:time}, we obtain 
\begin{align}\nonumber
\mathbb{P}(|\kappa(t)-M(t)|\geq N^{3/4})\leq Const \cdot N^{-1/2}\ .
\end{align}
Since $\|\eta_k^\text{aux}\|_{BL}$ and $\|\Phi(\eta_k^\text{aux})\|_{BL}$ are bounded, it follows that 
\begin{align}\nonumber
\|\eta_{k+1}^\text{aux}-\eta_{k}^\text{aux}\|_{BL}
= N^{-1}\|-\eta_{k}^\text{aux}+\Phi(\eta_{k}^\text{aux})\|_{BL}
\leq Const/N
\ .
\end{align}
Therefore, one obtains:
\begin{align}\nonumber
\sup\left\{\|\eta^\text{aux}_{\kappa(t)}-\eta^\text{aux}_{M(t)}\|_{BL}: |\kappa(t)-M(t)|\leq N^{3/4}\right\}\leq Const\cdot N^{3/4}N^{-1} = Const\cdot N^{-1/4}
\ ,
\end{align}
and the estimate in Equation~\eqref{eq:timing_time} follows as
\begin{align}\nonumber
 \mathbb{E}\left(d(\eta^\text{aux}_{\kappa(t)},\eta^\text{aux}_{M(t)})\right)\leq Const\cdot N^{-1/4}\ .
\end{align}

(ii) Second, we show the estimate in Equation~\eqref{eq:timing_groenwall} for $d(\eta^\text{aux}_{M(t)},\rho_{t})$. Recall that by Definition~\ref{def:Phi}, we have
\begin{align}\nonumber
\Phi(\eta^\text{aux}_k)(p)=
2\lambda \delta\left(p-R(\langle p\rangle_{\eta^\text{aux}_k})\right)
+    (1-2\lambda)\frac{1+s\big(b\langle p\rangle_{\eta^\text{aux}_k}-cp\big)}{1+s(b-c)\langle p\rangle_{\eta^\text{aux}_k}}\eta^\text{aux}_k(p)\ . 
\end{align}
We write the average density of the auxiliary process at update step $k$ by applying Definition~\ref{def:aux} iteratively:
\begin{align}\nonumber
\eta^\text{aux}_{k}(p)&=\eta^\text{aux}_{k-1}(p)-\frac{1}{N}\eta^\text{aux}_{k-1}(p)+ \frac{1}{N}\Phi(\eta^\text{aux}_{k-1})(p)\ , \\\nonumber
&=\eta^\text{aux}_{0}(p)+\frac{1}{N}\sum_{j=0}^{k-1} \big[\Phi(\eta^\text{aux}_j)(p)-\eta^\text{aux}_{j}(p)\big]\ ,\\\nonumber
&=\eta^\text{aux}_{0}(p)+\sum_{j=0}^{k-1} \left[\frac{2\lambda}{N} \delta\left(p-R(\langle p\rangle_{\eta^\text{aux}_j})\right)
+    \frac{1-2\lambda}{N}\frac{1+s\big(b\langle p\rangle_{\eta^\text{aux}_j}-cp\big)}{1+s(b-c)\langle p\rangle_{\eta^\text{aux}_j}}\eta^\text{aux}_j(p)-\frac{1}{N}\eta^\text{aux}_{j}(p)\right]\ ,\\\nonumber
&=\eta^\text{aux}_{0}(p)+\sum_{j=0}^{k-1} N^{-1} \left(1+s(b-c)\langle p\rangle_{\eta^\text{aux}_{j}}\right)^{-1}\big[ 2\lambda\left(1+s(b-c)\langle p\rangle_{\eta^\text{aux}_{j}}\right)\left(\delta(p-R(\langle p\rangle_{\eta^\text{aux}_j})- \eta^\text{aux}_j\right)\\\nonumber
&\quad\quad\quad\quad\quad\quad\quad\quad\quad\quad+     (1-2\lambda)sc\big(\langle p\rangle_{\eta^\text{aux}_j}-p\big)\eta^\text{aux}_j(p)\big]\ ,\\\nonumber
&=\eta^\text{aux}_{0}(p)+\int_0^{\langle \tau_k \rangle}\d{t^\prime}\ 
\big[2\lambda(1+s(b-c)\langle p\rangle_{\eta^\text{aux}_{M({t^\prime})}})\left( \delta\big(p-R(\langle p\rangle_{\eta^\text{aux}_{M({t^\prime})}})\big)-\eta^\text{aux}_{M({t^\prime})}(p)\right)\\\nonumber
&\quad\quad\quad\quad\quad\quad\quad\quad\quad\quad+   (1-2\lambda)sc\big(\langle p\rangle_{\eta^\text{aux}_{M({t^\prime})}}-p)\big)\eta^\text{aux}_{M({t^\prime})}(p)\big]\ ,
\end{align}
where, in the last line, it was exploited that the $j^\text{th}$ update step occurs after an average time $\langle \Delta\tau_j \rangle = N^{-1} \left(1+s(b-c)\langle p\rangle_{\eta^\text{aux}_{j}}\right)^{-1}$.
Furthermore, the $L^1$-norm of the above integrand is bounded. Thus, one finds a constant $0<Const<\infty$ such that:
\begin{align}\nonumber
\|\eta^\text{aux}_{M(t)}&-T_t(\eta^\text{aux}_{M(t)})-\eta^\text{aux}_{0}\|_{BL}\ ,\\\nonumber
&\leq\big\|\int_{\langle \tau_k \rangle}^{t}\d {t^\prime}\ 
\big[2\lambda(1+s(b-c)\langle p\rangle_{\eta^\text{aux}_{M({t^\prime})}})\left( \delta\big(p-R(\langle p\rangle_{\eta^\text{aux}_{M({t^\prime})}})\big)-\eta^\text{aux}_{M({t^\prime})}(p)\right)\ ,\\\nonumber
&\quad\quad\quad\quad\quad\quad\quad\quad\quad\quad +(1-2\lambda)sc\big(\langle p\rangle_{\eta^\text{aux}_{M({t^\prime})}}-p)\big)\eta^\text{aux}_{M({t^\prime})}(p)\big]
\big\|_{BL}\ ,\\\nonumber
&\leq Const\cdot (t-\langle \tau_k \rangle)\ ,\\\nonumber
&\leq  Const/N\ .
\end{align}
Therefore, one estimates with the triangle inequality:
\begin{align}\nonumber
\|\eta^\text{aux}_{M(t)}-\rho_{t}\|_{BL}&=\|\eta^\text{aux}_{M(t)}-T_t(\eta^\text{aux}_{M(t)})-\eta^\text{aux}_{0}+T_t(\eta^\text{aux}_{M(t)})+\eta^\text{aux}_{0}-\rho_{t}\|_{BL}\ ,\\\nonumber
&\leq \|\eta^\text{aux}_{M(t)}-T_t(\eta^\text{aux}_{M(t)})-\eta^\text{aux}_{0}\|_{BL}
+\|T_t(\eta^\text{aux}_{M(t)})+\eta^\text{aux}_{0}-T_t(\rho) -\rho_0\|_{BL}\ ,\\\nonumber
&\leq Const/N + \|T_t(\eta^\text{aux}_{M(t)})-T_t(\rho)\|_{BL} + d(\eta^\text{aux}_{0},\rho_0)\ ,\\\nonumber
&\leq Const/N+ Const\cdot\int_{0}^t \d{t^\prime}\ \big\| \eta^\text{aux}_{t^\prime} -\rho_{t^\prime}\big\|_{BL}\ .
\end{align}
The last line follows with Lemma~\ref{lem:Trho} and with $d(\eta^\text{aux}_{0},\rho_0) = 0$ because $\eta^\text{aux}_{0}=\rho_0$. 
By applying Gr\"onwall's inequality to the last line above, one obtains the estimate in Equation~\eqref{eq:timing_groenwall}:
\begin{align}\nonumber
d(\eta^\text{aux}_{M(t)},\rho_{t})\leq \frac{1}{N}e^{Const\cdot t}\ .
\end{align}

Combining the estimates for summand (i) in Equation~\eqref{eq:timing_time} and summand (ii) in Equation~\eqref{eq:timing_groenwall}, Lemma~\ref{lem:macro} follows.
\end{proof}

%%%%%%%%%%%%%%%%%%%%%%%%%%%%%%%%%%%%%%%%%%%%%%%%%%%%%%%%%%%%%%
\subsection{Proof of the theorem}

\begin{proof}

We estimate the expectation value of the distance between the empirical one-particle density of the microscopic stochastic many-particle process and the mean-field density by the estimates obtained in Lemmas~\ref{lem:LLN},~\ref{lem:micro} and~\ref{lem:macro}. 
 In total, one finds a $0<Const<\infty$ such that 
\begin{align}\nonumber
\mathbb{E}\left(d(\rho_N^{(1)}(t),\rho(t))\right)
&\leq 
\mathbb{E}\left(d(\rho_{N,\kappa(t)}^{(1)},\eta_{N,\kappa(t)}^\text{aux})\right)
+\mathbb{E}\left(d(\eta_{N,\kappa(t)}^\text{aux},\eta_{\kappa(t)}^\text{aux})\right)
+\mathbb{E}\left(d(\eta_{\kappa(t)}^\text{aux}, \rho_t)\right)\ ,\\\nonumber
&\leq  e^{Const\cdot\frac{M(t)}{N}}\left(\mathbb{E}\left( d(\rho_{N,0}^{(1)},\eta_{N,0}^\text{aux})\right) 
+ \frac{M(t)^{3/4}}{N}\right)\\\nonumber
&\qquad +Const\cdot  \frac{M(t)^{3/4}}{N}+\mathbb{E}\left(d(\eta^\text{aux}_{N,0},\eta^\text{aux}_{0})\right)
+ Const(t)\cdot N^{-1/4}\ .
\end{align}
Note that  $\eta^\text{aux}_{0} =\rho_0$. 
Note also that expectation values above are taken with respect to both  sampling production degrees and sampling update times.
Furthermore, for a given time $t$, $M(t)$ is of order $N$. 
Thus, one estimates:
\begin{align}\nonumber
\mathbb{E}\left(d(\rho_N^{(1)}(t),\rho(t))\right)\ 
 \leq Const(t)\cdot
 \left(
 \mathbb{E}\left(d(\eta^\text{aux}_{N,0},\rho_{0})
 \right) 
 +  N^{-1/4}
 \right)\ ,
\end{align}
with some constant $0<Const<\infty$ that depends on the chosen time $t$.
Applying Markov's inequality establishes an error estimate of the convergence with an (even $N$-dependent) $\epsilon_N$:
\begin{align}\nonumber
\mathbb{P}\left(d(\rho_N^{(1)}(t),\rho(t))>\epsilon_N\right)
&\leq \frac{\mathbb{E}\left(d(\rho_N^{(1)}(t),\rho(t))\right)}{\epsilon_N} \\\nonumber
&\leq \frac{Const(t)}{\epsilon_N}\cdot
\left(\mathbb{E}\left(d(\eta^\text{aux}_{N,0},\rho_{0})\right) +  N^{-1/4}\right)\ ,
\end{align} 
which proves Theorem \ref{thm:main}.

\end{proof}

%%%%%%%%%%%%%%%%%%%%%%%%%%%%%%%%%%%%%%%%%%%%%%%%%%%%%%%%%%%%%
%%%%%%%%%%%%%%%%%%%%%%%%%%%%%%%%%%%%%%%%%%%%%%%%%%%%%%%%%%%%%
\subsection*{Acknowledgements}
This research was supported by the German Excellence Initiative via the program ``Nanosystems Initiative Munich'' (NIM) and by the Deutsche Forschungsgemeinschaft within the framework SPP1617 on ``Phenotypic heterogeneity and sociobiology of bacterial populations'' (through grant FR 850/11-1, 2). 
We thank Mauro Mobilia and Stefano Duca for discussions about applications of our work to the field of opinion dynamics.
The authors declare no conflict of interest.

\clearpage
%%%%%%%%%%%%%%%%%%%%%%%%%%%%%%%%%%%%%%%%%%%%%%%%%%%%%%%%%%%%%
% BibTeX users please use one of
%\bibliographystyle{spbasic}      % basic style, author-year citations
%\bibliographystyle{spmpsci}      % mathematics and physical sciences
%\bibliographystyle{spphys}       % APS-like style for physics
%\bibliography{}   % name your BibTeX data base

% Bibliography
%\bibliographystyle{spphys}
%\bibliographystyle{spmpsci}
%\bibliography{adaptation_math.bib}

%%%%%%%%%%%%%%%%%%%%%%%%%%%%%%%%%%%%%%%%%%%%%%%%%%%%%%%%%%%%%
%% Non-BibTeX users please use
%\begin{thebibliography}{}
%%
%% and use \bibitem to create references. Consult the Instructions
%% for authors for reference list style.
%%
%\bibitem{RefJ}
%% Format for Journal Reference
%Author, Article title, Journal, Volume, page numbers (year)
%% Format for books
%\bibitem{RefB}
%Author, Book title, page numbers. Publisher, place (year)
%% etc
%\end{thebibliography}

\end{document}